\documentclass[11pt,4apaper]{article}
\usepackage[utf8]{inputenc}
\usepackage{verbatim}
\usepackage{amsmath,amssymb,array}
\usepackage{amsthm}
\usepackage{float}
\usepackage[algoruled,linesnumbered,noend]{algorithm2e}

\begin{document}

\newfloat{Algorithm}{h}{loa}
\newtheorem{Thm}{Theorem}[section]
\newtheorem{Lem}[Thm]{Lemma}
\newtheorem{Proposition}[Thm]{Proposition}
\newtheorem{Obs}[Thm]{Observation}
\newtheorem{Corollary}[Thm]{Corollary}
\newtheorem{Cl}[Thm]{Claim}
\theoremstyle{definition}
\newtheorem{Def}{Definition}[section]
\newtheorem{PB}{Problem}
\newtheorem{Example}{Example}[section]
\newcommand{\nota}[1]{{\sffamily\bfseries #1}\marginpar{\framebox{$\mathbf{\Leftarrow}$}}}
\newcommand{\grafo}{\mathcal G}
\newcommand{\greedy}{\textsc   Greedy-CC}
\newcommand{\Algo}{\textsc Greedy-reduced-CC}

\title{A PTAS for the Minimum Consensus Clustering Problem with a Fixed Number
  of Clusters}
\author{Paola Bonizzoni\thanks{Dipartimento di Informatica, Sistemistica e
    Comunicazione, Universit\`a degli Studi di Milano-Bicocca,
    Milano, Italy, \texttt{bonizzoni@disco.unimib.it}}
  \and Gianluca Della Vedova\thanks{Dipartimento di Statistica,
    Universit\`a degli Studi di Milano-Bicocca, Milano, Italy, \texttt{gianluca.dellavedova@unimib.it}}
  \and Riccardo Dondi\thanks{Dipartimento di Scienze dei Linguaggi, della
    Comunicazione e degli Studi Culturali, Universit\`a degli Studi di
    Bergamo, Bergamo, Italy, \texttt{riccardo.dondi@unibg.it}}}


\maketitle
\begin{abstract}
The Consensus Clustering problem has been introduced as an effective
way to analyze the results of different  microarray experiments
\cite{filkov04:_heter_data_integ_consen_clust_formal,DBLP:journals/ijait/FilkovS04}.
The problem  consists of looking for a partition that best
summarizes  a set of input partitions (each corresponding to a different microarray experiment)
under a simple and intuitive cost
function. The problem admits polynomial time algorithms on two input
partitions, but is APX-hard on three input partitions.
We investigate the restriction of Consensus Clustering
when the output partition is required to contain at most $k$ sets, giving a
polynomial time approximation scheme (PTAS) while proving the NP-hardness of  this restriction.
\end{abstract}


\section{Introduction}
Microarray data analysis is
a fundamental task in studying genes. Indeed,
microarray experiments provide measures of gene expression
levels under certain experimental conditions,
showing that groups of genes have a similar behavior under
certain conditions.
However, even slightly different experimental conditions may result in
significantly different expression data.
These gene expression patterns
are useful to understand the relations among genes and
could provide information useful for the construction of genetic
networks.
Nowadays the use of microarrays has become widespread and sufficiently cheap
to justify running a large battery of experiments under similar, albeit not
identical, conditions. The integration of the results is therefore the final
computational step needed to obtain a meaningful interpretation of the data.

In
\cite{filkov04:_heter_data_integ_consen_clust_formal,DBLP:journals/ijait/FilkovS04}
a clustering approach to the integration of different experimental
microarray experimental data was introduced.
In the proposed approach, called \textsc{Consensus Clustering},
the genes are represented by elements of a universe set.
The experimental data under certain experimental condition, are represented
as a partition of the universe set, where a set represents
elements (genes) that have similar expression level in the experiment.
The proposed approach then computes the consensus of the
partitions given by a collection of gene expression data,
since integrating different experimental data is
potentially more informative than the individual experimental data.
More precisely, \textsc{Consensus Clustering} asks for a partition of the universe set
that better summarizes a set of input partitions on the same universe.
The \textsc{Consensus Clustering}  problem
has been studied extensively in the literature
and
its NP-hardness over general instances is well-known
\cite{krivanek86:_hard_probl_hierar_tree_clust,wakabayashi98}.

The minimization version of \textsc{Consensus Clustering}, called
\textsc{Minimum Consensus Clustering},  admits
a $\frac{3}{2}$-approximation algorithm~\cite{DBLP:journals/jacm/AilonCN08} as
well as a
number of heuristics based on cutting-plane
\cite{grotschel89:_cuttin_plane_algor_clust_probl} and simulated
annealing \cite{DBLP:journals/ijait/FilkovS04}.
In the latter paper, it was observed that the problem
is trivially solvable for instances of at most two partitions,
while an open question, as recently recalled \cite{DBLP:journals/jacm/AilonCN08},
is the computational complexity of the problem (for both minimization and maximization
versions) on $k$ input partitions,  for any constant $k > 2$. The question
has been settled in \cite{DBLP:journals/jcss/BonizzoniVDJ08} by showing that
\textsc{Minimum Consensus Clustering} is APX-hard even on instances
with three input partitions, hence making
hopeless the search for a polynomial time algorithm. In this paper we will focus on the restriction of
the problem where the \emph{desired} consensus partition has at most $k$ sets, with
$k$ a constant.

A problem closely related to \textsc{Minimum Consensus Clustering} is
\textsc{Minimum Correlation Clustering}. In \textsc{Minimum Correlation Clustering},
given a complete graph where each edge is associated with a label in $\{+,-\}$,
the goal is to compute a partition of the vertices of the graph so that
the number of co-clustered vertices joined by $-$ edges and
and the number of vertices joined by $+$ edges and not co-clustered is minimized.
The restriction of \textsc{Minimum Correlation Clustering}
where the output partition has at most $k$ sets, is NP-hard but admits a PTAS~\cite{toc:v002/a013}.
We will extend the analysis of \cite{toc:v002/a013}
by showing that the analogous restriction \textsc{Minimum Consensus Clustering}
admits a PTAS, while being NP-hard.

Notice that  \textsc{Minimum Correlation Clustering} and \textsc{Minimum
  Consensus Clustering} are not comparable, since the input graph in
\textsc{Minimum Correlation Clustering} is unweighted, while the input graph
of \textsc{Minimum Correlation Clustering} is weighted.
On the other hand, it is quite immediate to notice that there are unweighted
graphs that are not an instance of
 \textsc{Minimum Consensus Clustering}.

\section{The problem}
\label{problem}

We will tackle the  \textsc{Consensus Clustering}
problem, in its minimization version. Two elements of the universe set are
\emph{co-clustered} in a partition $\pi$ if they belong to the same set of
$\pi$.

\begin{Def}
\label{symm-dist}
Let $V$ be a universe set and let $\pi_1, \pi_2$ be two partitions of $V$.
Let $d(\pi_1, \pi_2)$ denote the
\emph{symmetric difference distance} defined as 
the number of pairs of elements co-clustered in exactly one
of  $\pi_1$ and $\pi_2$. 
Let $s(\pi_1, \pi_2)$ denote  the \emph{similarity measure}
defined as
the number of pairs of elements co-clustered in both partitions
plus the number of pairs of elements not co-clustered in both
partitions $\pi_1$ and $\pi_2$.
\end{Def}

Given two elements $i, j$ of the universe set $V$ and a set
$\Pi=\{\pi_1,\ldots,\pi_l\}$ of partitions of $V$, we  denote by
$s_{\Pi}(i, j)$ (or simply $s(i, j)$ whenever $\Pi$ is known from
the context)  and the distance $d_{\Pi}(i, j)$  (or simply $d(i, j)$)
respectively,  the number of partitions of $\Pi$ in which $i, j$
are co-clustered and are not co-clustered. Clearly, for each pair
$(i, j)$, $d_{\Pi}(i, j)+s_{\Pi}(i, j)=l$, that is the number of partitions.
When $\Pi$ consists of $2$
partitions $\pi_1$ and $\pi_2$, we denote by $d(\pi_1,\pi_2)$ the quantity
$\sum_{i<j} d_{\{\pi_1,\pi_2\}}(i,j)$.

We are now able to formally introduce the  problem we will study in this paper,
\textsc{Minimum Consensus Clustering} when  the output partition is required to have at most
$k$ sets (denoted by \textsc{k-Min-CC}): we are given a set
$\Pi=\{\pi_1, \pi_2,..., \pi_l\}$ of partitions over universe $V$ and
we want to find  a partition $\pi$ of $V$, such that $\pi$ has at most $k$
sets and $\pi$  minimizes
$d(\pi, \Pi)=\sum_{i=1}^l d(\pi, \pi_i)$, that is the cost of solution $\pi$.
%
%
%
In what follows, we denote by \textsc{k-Min-CC}$(l)$ the restriction of the \textsc{k-Min-CC}
problem where the input consists of exactly $l$ partitions of $V$.

The \textsc{Minimum Consensus Clustering} is closely related to the
\textsc{Minimum Correlation Clustering}~\cite{DBLP:journals/ml/BansalBC04}, where
we are given a labeled complete graph, with each edge
labeled by either $+$ or $-$ and the goal is to compute a partition $C_1, C_2,
\dots ,C_k $ of the vertex set so that
the number of $+$ edges cut by the partition and the number of $-$ edges
inside a same set $C_i$ is minimized.
Several variants of the correlation clustering have been introduced \cite{DBLP:journals/jcss/CharikarGW05,swamy04:_correl_clust,DBLP:journals/jacm/AilonCN08}.

An instance of
\textsc{Minimum Consensus Clustering} can be represented with a labeled
complete graph $G=(V,E)$, where each edge $(v, w) \in E$ is labeled by
$s_{\Pi}(v,w)$. In Section~\ref{PTAS} we assume that the instance of
\textsc{k-Min-CC}$(l)$ is precisely this
graph representation of \textsc{Minimum Consensus Clustering}.

\section{The PTAS}
\label{PTAS}

In this section we will show that the \textsc{k-Min-CC} admits a PTAS, that is
for any $\epsilon>0$ a polynomial time approximation algorithm with a
guaranteed $1+\epsilon$
ratio between the costs of the approximate solution and the optimal solution.
Let $G=(V,E)$ be  the complete graph instance of \textsc{k-Min-CC}.

The MinDisAg algorithm of \cite{toc:v002/a013} for \textsc{Minimum Correlation
Clustering} can be restated to solve \textsc{k-Min-CC} and is reported here as
Alg.~\ref{alg:ptas}.
Let us detail the idea behind MinDisAg  \cite{toc:v002/a013} and how it can be
generalized.
First of all, some ``small'' instances  are solved by a brute force approach,
namely when only one
set must be computed or when the number $n$ of input elements is polynomial
in $k$ (the number of desired output sets).
In fact, there are at most $k^n$ possible partitions of $V$, and $k^n$ is a
constant whenever $n$ is polynomial in $k$.

The algorithm starts by randomly sampling a subset $S$ of $V$. If the sample
is not too large (i.e. $O(\log n)$), then it is
possible to compute all partitions of $S$ in polynomial time.
Since the steps that the algorithm performs for each partition
require polynomial time, the whole algorithm has polynomial time
complexity.

The algorithms extends each partition  of S to
a partition of $V$. Since the number of partitions of $S$ is polynomial, we
can restrict our attention only to the partition  $\tilde{S}$ that fully agrees
on $S$ with the overall optimal solution $\mathcal{D}$.
On that specific partition, extending   $\tilde{S}$ to a partition of $V$
introduces only a few  errors.


More precisely, the algorithm applies a
greedy procedure to  extend $\tilde{S}$:  it  assigns  independently each
element $x$ of $V\setminus S$ to the cluster of $\tilde{S}$ that minimizes the
total cost of all pairs made of $x$ and an element of $S$.

This procedure computes a clustering of $V$ into sets that can be
distinguished into large and small, depending on the fact that a set is
smaller or larger than a certain threshold. The large sets are retained, while
all small sets are merged together obtaining a new universe set which is in
turn recursively fed to the algorithm (only this time requiring  a smaller
error ratio and obtaining a partition with fewer sets.)

We remember that $l$ denotes  the number of input partitions, and
$k$ denotes the number of sets in the output partition. Given a
partition $P$ of $V$, the cost of $P$
is denoted by $cost(P)$.
Let  $\epsilon'$ be equal to $\frac{\epsilon}{128\cdot 20^2 k^4}$ (i.e. $\epsilon'$ is a
constant depending only on $\epsilon$ and the number of sets in the output
partition). We
distinguish two cases: the optimum is at most $\epsilon' n^2$ or
at least $\epsilon' n^2$. In the latter case we exploit the fact
that it is possible to solve the problem in polynomial time and
with a guaranteed \emph{additive} error $\epsilon\epsilon' n^2$,
where $n$ is the number of elements in the universe, for any
constant $\epsilon >0$ (see
\cite{DBLP:journals/jcss/BonizzoniVDJ08} for details). Then the
approximation ratio is at most $\frac{\epsilon\epsilon' n^2 +
  \epsilon' n^2}{\epsilon' n^2} = 1+\epsilon$, that is the algorithm in
\cite{DBLP:journals/jcss/BonizzoniVDJ08} computes the required approximate solution.
Therefore, in the
following we only have to investigate the case when the optimal solution has a cost at most
$\epsilon' n^2$.

We define
$t = \frac{2560000 k^4}{\epsilon^2}\log n$ as the size of the sample set $S$,
$\mathcal{D} = \{\mathcal{D}_1, \ldots, \mathcal{D}_k\}$ as the optimal solution (whose cost is
denoted by $\gamma n^2$).
Let $\tilde{S}$ be the partition $\{X\cap S: X\in \mathcal{D}\}$, that is the restriction of
$\mathcal{D}$ to the set $S$. We recall that we will mainly focus on the iteration of
steps
6--21 where such $\tilde{S}$ is extended to a partition of the universe set $V$.
Let $\mathcal A$ be a partition of a set $A \subseteq V$, and let $x$ be an element of $V$.
Then $N^{\mathcal A}(x)$ is the set of all elements of $A$ different
from $x$ and co-clustered with $x$ in $\mathcal A$.
Given an element $u \in V$, define $val^{\mathcal A}_i(u)$:
\[
val^{\mathcal A}_i(u) = \frac{1}{|A \setminus \{ u \}| } \left(|\{x\in N^{\mathcal A}(u) \wedge s(x,u)=i\}|+|\{x\notin N^{\mathcal A}(u) \wedge d(x,u)=i\}|\right).
\]

Informally $val^{\mathcal A}_i(u)$ is the fraction of pairs consisting of $u$ and an element of $A$ that
may give a contribution $l-i$ to the cost of the  solution. Moreover we define
$val^{\mathcal   A}(u)$ as
\[val^{\mathcal
  A}(u) = \frac{\sum_{x\in N^{\mathcal A}(u)}s(x,u)+\sum_{x
  \notin N^{\mathcal A}(u)}d(x,u)}{l|A  \setminus  \{ u \}| }.\]
Informally  $val^{\mathcal A}(u)$ is
the fraction of input pairs containing $u$ on which  ${\mathcal A}$ agrees.
Notice that $val^{\mathcal A}(u) = \frac{1}{l} \sum_{i=1}^l i\cdot val^{\mathcal A}_i(u)$.
Let $\mathcal A$ be a partition of the set $A$, then
$\mathcal A(u,i)$ is the partition obtained from  $\mathcal A$ moving the element
$u$ to the set $A_i$ (notice that $u$ may not belong to $A$).
Given an integer $j$, with $1\le j\le l$, define $pval^{\mathcal A}(u,i) =
val^{\mathcal A(u,i)}(u)$
and $pval_j^{\mathcal A}(u,i)=val_j^{\mathcal A(u,i)}(u)$.
Finally we introduce the notion of $\beta$-good partition, which  is a good
approximation of the optimal partition. Let $X$ be a subset of $V$, $\mathcal
A$ be a partition of $A$ and
$\beta=\frac{\epsilon}{128\cdot 20^2 k^4}$.
 Then
$\mathcal A$ is
\emph{$\beta$-good} if for each $u\in V$, $0\le j\le l$ and
$1\le i\le k$, then
\[
\left|pval_j^{\mathcal A}(u,i)-pval_j^{\mathcal{D}}(u,i)\right|\le \beta
.\]
%

\begin{algorithm}
\label{alg:ptas}
\KwIn{A set $\Pi$ of partitions of $V$}
\KwOut{A $k$-clustering of the graph, i.e. a partition of $V$ into at most $k$
  sets $V_1,\ldots ,V_k$}
\uIf{$k=1$}{Return the obvious $1$-clustering\;}
\uIf{$n\le 16k^2$}{Return the optimal $k$-clustering, obtained by exhaustive search\;}
ClusMax$\gets$the result of the PTAS for Max Consensus
Clustering~\cite{DBLP:journals/jcss/BonizzoniVDJ08} with accuracy
$\bar{\epsilon}(\epsilon,k)$\;
Pick a sample $S\subseteq V$ by drawing $|S|=\frac{500 \log n}{\beta^2}$
elements uniformly at random  with replacement\;
$m\gets \infty$\;
\ForEach{each partition $\bar{S}$ of $S$,   $\bar{S}=\{S_1, \ldots , S_k\}$}{%
  Initialize the clusters $C_i\gets S_i$ for $1\le i\le k$\;
  \For{each $u\in V\setminus S$}{
    $j_u\gets \text{arg} \min_i\left\{cost\left(\bar{S} \setminus S_i \cup (S_i\cup\{u\})\right)\right\}$\;
    \tcc{$j_u$ maximizes $pval^{\bar{S}}(u,j_u)\}$}
    \tcc{$val^{\bar{S}}(u)\gets pval^{\bar{S}}(u,j_u)$}
    Add $u$ to the set $C_{j_u}$\;
  }
  \tcc{Compute the set of large and small clusters}
  $Large \gets \{j| 1\le j\le k, |C_j|\ge \frac{n}{2k}\}$\;
  $Small \gets \{1,\ldots ,k\}\setminus Large$\;
  $l\gets |Large|$ and $s\gets k-l=|Small|$\;
  $W\gets \bigcup_{j\in Small}C_j$\;
  $\Pi'\gets$ the restriction of the partitions in $\Pi$ to the new universe set $W$\;
  Recursively call MinDisAg on the partitions $\Pi'$ and with arguments
  $(s,\epsilon/3)$. Denote by $W'_1, W'_2, \ldots W'_s$ the result\;
  $\mathcal{C}\gets\{C_1, \ldots, C_l, W'_1, \ldots W'_s\}$\;

  \If{$cost(\mathcal{C})< m$}{
    $m\gets cost(\mathcal{C})$\;
    ClusMin$\gets\mathcal{C}$\;
  }
}
Return the better of the two clusterings ClusMax and ClusMin\;
\caption{MinDisAg$(k,\epsilon)$}
\end{algorithm}

\subsection{Analysis of the Algorithm}


Notice that the main contribution of this section lies in Lemma
\ref{lemma:1} which is a stronger version of a result in
\cite{toc:v002/a013}; in that paper the notion of $pval^A(u,i)$ is
sufficient because the problem studied is
unweighted. In our paper we study a problem where each pair of elements can
have a cost that is an integer between $0$ and $l$, therefore we
need a definition of $pval_j^A(u,i)$, with a new parameter
$j$ expressing the number of input partitions where two elements
are either co-clustered or not co-clustered. Indeed our definition of $\beta$-goodness
requires that a certain inequality holds for values of $j$ that
are integers between $0$ and $l$, while  in
\cite{toc:v002/a013}   $j$ can -- implicitly -- only
take $0$ or $1$ as value.

Recall that we denote by $\tilde{S}$ the restriction of $\mathcal{D}$ to the sample set $S$.
The following lemma proves that $\tilde{S}$ is, with high probability, a good sample of the
optimal solution.

\begin{Lem}
\label{lemma:1}
The partition  $\tilde{S}$ is $\beta$-good with probability at least $1-O(\frac{1}{\sqrt{n}})$.
\end{Lem}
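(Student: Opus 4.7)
This is a standard concentration-plus-union-bound argument. For each fixed triple $(u,i,j)$, I view $pval_j^{\mathcal{D}}(u,i)$ as a deterministic population average over $V\setminus\{u\}$ and $pval_j^{\tilde{S}}(u,i)$ as the corresponding empirical average over the i.i.d.\ uniform samples that make up $S$. Hoeffding's inequality then controls the deviation for a single triple, and a union bound over all triples gives the high-probability statement.

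Concretely, for fixed $u\in V$, $1\le i\le k$, and $0\le j\le l$, define for each $x\in V\setminus\{u\}$ the $\{0,1\}$-valued indicator $Y_x^{(u,i,j)}$ that equals $1$ exactly when either ($x\in \mathcal{D}_i$ and $s(x,u)=j$) or ($x\notin \mathcal{D}_i$ and $d(x,u)=j$). Directly from the definition of $val_j^{\mathcal{D}(u,i)}(u)$ we have $pval_j^{\mathcal{D}}(u,i)=\frac{1}{n-1}\sum_{x\ne u}Y_x^{(u,i,j)}$, while $pval_j^{\tilde{S}}(u,i)$ is, up to the denominator discrepancy discussed below, the empirical average of the same indicator over the $t=|S|$ i.i.d.\ draws $x_1,\ldots,x_t$. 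Since each $Y_{x_r}\in\{0,1\}$, Hoeffding's inequality bounds the probability that the empirical average deviates from its mean by more than $\beta/2$ by $2\exp(-t\beta^{2}/2)$; plugging in $t=500\log n/\beta^{2}$ from the algorithm makes this bound at most $2n^{-125}$.

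A union bound over the at most $n\cdot k\cdot (l+1)$ triples $(u,i,j)$ then produces a failure probability of $O(n\cdot k\cdot l\cdot n^{-125})$. Because $k$ is a constant and $l$ is at most polynomial in the instance size, this quantity is comfortably $O(1/\sqrt{n})$ (indeed much smaller), which is exactly the bound claimed in the lemma statement.

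\textbf{Main obstacle.} The only non-routine bookkeeping is reconciling the empirical average of the $Y_{x_r}$ with $pval_j^{\tilde{S}}(u,i)$ itself. The latter uses the denominator $|S\setminus\{u\}|$ rather than $t$ (relevant when $u\in S$), and with-replacement sampling may produce repetitions so that $S$ viewed as a set is strictly smaller than $t$. Both effects perturb the numerator or denominator by additive terms of order $1/t$, which is $o(\beta)$ given the choice of $t$. Hence asking Hoeffding for a tail bound at deviation $\beta/2$ in place of $\beta$ absorbs all of this slack without affecting the exponential rate, so the final probability statement is unchanged.
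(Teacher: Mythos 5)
Your argument is correct and follows the same Hoeffding-plus-union-bound strategy as the paper: define an indicator per sample, observe that $pval_j^{\mathcal{D}}(u,i)$ is its population mean and $pval_j^{\tilde S}(u,i)$ its sample mean, apply Hoeffding, and union-bound over the triples $(u,i,j)$. If anything your version is slightly more careful than the paper's -- you union-bound over all $n\cdot k\cdot(l+1)$ triples, whereas the paper loosely speaks of a union bound over the $t$ sample elements (inconsequential since the per-triple failure probability is $n^{-\Theta(1)}$ with a huge constant), and you explicitly flag the $|S\setminus\{u\}|$ vs.\ $t$ and with-replacement discrepancies, which the paper glosses over.
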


\begin{proof}
Let $v$ be an element of $S$ and let $u$ be an element of $V$. Let $p(v,i,j)$
be a variable equal to $1$ if and
only if $v\in N_{\mathcal{D}(u,i)}(u)$ and $s(v,u)=j$ or $v\notin N_{\mathcal{D}(u,i)}(u)$
and $d(v,u)=j$. Pose $p(v,i,j)=0$  otherwise.

By construction of $p(v,i,j)$ and $pval_j^{\mathcal{D}}(u,i)$, the probability
$Pr[p(v,i,j)=1] = pval_j^{\mathcal{D}}(v,i)$, as the set $S$ is sampled
randomly from $V$. Also notice that $pval_j^{\tilde{S}}(v,i)=val_j^{\tilde{S}(v,i)}(v)=
\frac{1}{|S\setminus \{u\}|} \left(|\{x\in N_{\tilde{S}(v,i)}(v) \wedge s(x,v)= j\}|+|\{x\notin N_{\tilde{S}(v,i)}(v)
\wedge d(x,v)=j\}|\right)=\frac{1}{|S\setminus \{u\}|} \sum_{v\in S \setminus \{u\}}
p(v,i,j)$, as the latter equality is an immediate consequence of the
definition of $p(v,i,j)$.

The Hoeffding bound states that, given some causal variables  $X_i$ such
that $Pr[X_i=1]=p$ (and $X_i=0$ otherwise), then
$Pr[| X_a - \frac{1}{m}\sum_{a=1}^m X_a|>\beta]\le 2e^{-2m\beta^2}$.
In our case the causal variable $X_a$ are $p(v,i,j)$, and the sum is over all
elements $v\in S\setminus \{u\}$, therefore the inequality becomes
$Pr[| p(v,i,j) - \frac{1}{|S\setminus \{u\}|}\sum_{v\in S\setminus \{u\}} p(v,i,j)|>\beta]\le
2e^{-2(|S\setminus \{u\}|)\beta^2} \le 2e^{-2t\beta^2}$.
By the previous arguments, the inequality can be rewritten as
$Pr[| pval_j^{\mathcal{D}}(u,i) - pval_j^{\tilde{S}}(u,i) |>\beta]\le
2e^{-2t\beta^2}$, which gives an upper bound on the probability that any
element $u\in V$ does not satisfy the requirements of an $\beta$-good set.

Applying a union bound we obtain that the probability of having at least one
of the $t$ elements not satisfying the requirements is at most
$2t e^{-2t\beta^2}$. Since $|S|=\frac{500\log n}{\beta^2}$, the partition
$\tilde{S}$ is $\beta$-good with probability at least $1-2 \frac{500\log
  n}{\beta^2} e^{-1000\log n}=
1-2 \frac{500\cdot 160^2\cdot 20^2 k^4 l^2 \log
  n}{\beta\epsilon^2}\frac{1}{n^{1000}}$, which is trivially larger than
$1-\frac{c}{\sqrt{n}}$
for some constant $c$.
\end{proof}

We will now provide some simple generalizations of the Lemmas in
\cite{toc:v002/a013}, omitting the proofs as they are straightforward
extensions of those in \cite{toc:v002/a013}. Just as in  \cite{toc:v002/a013},
we will assume that the sample $S$ is $\beta$-good, for some constant $\beta$, and we will focus on the
iteration of the algorithm for the partition $\bar{S}$ of $S$ that agrees with
the optimal partition ${\mathcal \mathcal{D}}$. We will denote by $C_1, \ldots , C_k$ the
sets in ClusMin at the end of such iteration.


\begin{Lem}[Lemma 4.3 in \cite{toc:v002/a013}]
\label{lemma:sample-set-is-ok}
Let $u\in V \setminus S$ with $u\in \mathcal{D}_s$ (that is the $s$-th set of the optimal
solution), and $u\in C_r$ for $r\neq s$ (that is  $u$ is misplaced by the
algorithm).
Then $pval_j^{\mathcal{D}}(u,r)\ge pval_j^{\mathcal{D}}(u,s)-2\beta=val_j^{\mathcal{D}}(u)-2\beta$
for each  $0\le j\le k$.
\end{Lem}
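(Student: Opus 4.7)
The plan is to combine two ingredients: the $\beta$-goodness of $\tilde{S}=\bar{S}$ (Lemma~\ref{lemma:1}) and the greedy assignment rule executed by the algorithm. Since the algorithm picked $j_u=r$ instead of $s$, we know $\bar S$ prefers $r$ over $s$ as a host cluster for $u$, and the $\beta$-goodness lets us transport this comparison from the sampled partition $\bar{S}$ back to the optimal partition $\mathcal{D}$.

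I would start with the equality on the right. Since $u\in\mathcal{D}_s$, the partition $\mathcal{D}(u,s)$ obtained by moving $u$ into $\mathcal{D}_s$ coincides with $\mathcal{D}$ itself. Hence
\[
pval_j^{\mathcal{D}}(u,s)=val_j^{\mathcal{D}(u,s)}(u)=val_j^{\mathcal{D}}(u),
\]
which is the promised identity.

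For the inequality I would split the gap into three telescoping pieces:
\[
pval_j^{\mathcal{D}}(u,r)-pval_j^{\mathcal{D}}(u,s)=\bigl[pval_j^{\mathcal{D}}(u,r)-pval_j^{\bar S}(u,r)\bigr]+\bigl[pval_j^{\bar S}(u,r)-pval_j^{\bar S}(u,s)\bigr]+\bigl[pval_j^{\bar S}(u,s)-pval_j^{\mathcal{D}}(u,s)\bigr].
\]
The first and third brackets are each bounded below by $-\beta$ by applying the $\beta$-goodness of $\bar{S}$ with indices $(u,r)$ and $(u,s)$ respectively. The middle bracket should be non-negative by the greedy step that produced $r=j_u$.

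The point that needs care, and the main obstacle, is justifying that the greedy rule (which minimises the aggregate cost, equivalently maximises $pval^{\bar S}(u,\cdot)$) actually guarantees $pval_j^{\bar S}(u,r)\ge pval_j^{\bar S}(u,s)$ \emph{for each} $j$. Unlike the unweighted setting of \cite{toc:v002/a013}, here $pval^{\bar S}(u,i)=\frac{1}{l}\sum_{j=1}^{l} j\cdot pval_j^{\bar S}(u,i)$ is a weighted sum. To handle this I would inspect the definitions directly: for any candidate cluster index $i$, each $v\in S\setminus\{u\}$ contributes to exactly one $pval_j^{\bar S}(u,i)$, with $j=s(v,u)$ if $v\in S_i$ and $j=d(v,u)$ otherwise. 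Thus $pval_j^{\bar S}(u,i)$ is a sum of indicator contributions on a partition of the pairs, so the per-$j$ accounting is preserved when $u$ is moved between the two clusters $S_r,S_s$ (all pairs with $v\notin S_r\cup S_s$ are unaffected). Chasing this bookkeeping, I expect the greedy comparison to decompose coordinate-wise, which together with the two $\beta$-goodness bounds above yields $pval_j^{\mathcal{D}}(u,r)\ge pval_j^{\mathcal{D}}(u,s)-2\beta$ as claimed.
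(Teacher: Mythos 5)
Your overall strategy (telescoping through the sample partition and invoking $\beta$-goodness twice) is the right one, and the equality $pval_j^{\mathcal{D}}(u,s)=val_j^{\mathcal{D}}(u)$ is correct for exactly the reason you give. The first and third brackets are also handled correctly: $\beta$-goodness is defined per-$j$, so each of those is $\ge -\beta$ directly.

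The genuine gap is the middle bracket. The greedy rule only guarantees the \emph{weighted} inequality
\[
pval^{\bar S}(u,r)=\tfrac{1}{l}\sum_{j}j\cdot pval_j^{\bar S}(u,r)\;\ge\;\tfrac{1}{l}\sum_{j}j\cdot pval_j^{\bar S}(u,s)=pval^{\bar S}(u,s),
\]
and this does \emph{not} decompose coordinate-wise as you hope. Your own bookkeeping actually reveals why: an element $v\in S_r$ contributes to bucket $j=s(v,u)$ when $u$ is assigned to $r$, but to bucket $j=d(v,u)=l-s(v,u)$ when $u$ is assigned to $s$ (and symmetrically for $v\in S_s$). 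So moving $u$ between $S_r$ and $S_s$ shuffles these contributions across buckets rather than preserving them. A concrete counterexample with $l=2$: let $S_r=\{v\}$ with $s(v,u)=2$, $S_s=\{w\}$ with $s(w,u)=0$, and nothing else. Then $pval_2^{\bar S}(u,r)=1$, $pval_0^{\bar S}(u,r)=0$, while $pval_2^{\bar S}(u,s)=0$, $pval_0^{\bar S}(u,s)=1$. The weighted comparison favours $r$ (so greedy would indeed pick $r$), yet for $j=0$ we have $pval_0^{\bar S}(u,r)-pval_0^{\bar S}(u,s)=-1$, and no amount of $\beta$-slack will rescue the per-$j$ inequality. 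So the step you flagged as needing care is in fact false, and the per-$j$ conclusion cannot be obtained this way.

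The resolution is that the lemma should be read without the subscript $j$: the claim that is actually used downstream (in the proof of Lemma~\ref{lem:large-sets-can-be-extended}, which writes $pval^{\mathcal{D}}(x,2)\ge val^{\mathcal{D}}(x)-2\beta$) is the unsubscripted, weighted statement, matching Lemma~4.3 of the source paper. With that reading your three-bracket decomposition goes through cleanly: the middle bracket is exactly what greedy delivers, and the first and third brackets follow by averaging the per-$j$ $\beta$-goodness inequalities (this costs a constant factor of order $l$ in translating per-$j$ closeness into closeness of the weighted averages, which is harmless since $\beta$ is tuned to absorb it). As stated, with the $j$-subscript, the lemma requires a hypothesis you do not have, and you should not attempt to prove it coordinate-wise.
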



Recall that $l$ is the number of input partitions, define
$T_{low}$ as the
set $\{ u\in V : val^{\mathcal{D}}(u)\le 1-\frac{1}{20k^2}\}$, and let us
call \emph{bad} all elements in $T_{low}$ and  \emph{good}
all elements that are not in $T_{low}$.
As  each element $u$ in $T_{low}$ contributes to the cost of a solution of \textsc{k-Min-CC}$(l)$
for at least $\frac{1}{2}l(n-1)(1- val^{\mathcal{D}}(u))\le \frac{1}{40k^2}l(n-1)$,
a simple counting argument allows us to prove
that there are at most $\frac{80\gamma n k^2}{l}$ bad elements.

For clarity's sake, we split Lemma~4.4 in \cite{toc:v002/a013} into two
separate statements, where the first statement
(Lemma~\ref{lemma:different-large-and-good-are-small}) is actually proved in
the first part of the proof of Lemma~4.4 in \cite{toc:v002/a013}, while the
second statement corresponds to Lemma~4.4  in \cite{toc:v002/a013}.
Those  technical results show that (i) our algorithm  clusters almost
optimally all good elements and (ii) all good elements in $Large$ are
optimally clustered, pending a  condition on various
parameters that  will be  proved at the end of the section
(for the definition of $Large$ and $Small$  see Algorithm \ref{alg:ptas}).
More precisely,
Lemma~\ref{lemma:different-large-and-good-are-small} states that  misplaced
good elements must
belong to some small sets (which in turn implies that the majority of good
elements must be optimally clustered).

\begin{Lem}
\label{lemma:different-large-and-good-are-small}
Let $u$ be an element in $C_i \setminus T_{low}$ but not in $\mathcal{D}_i\setminus T_{low}$. Then $u\in
\mathcal{D}_j$, for some $j \neq i$, and $|\mathcal{D}_i|\le 2(\frac{1}{20k^2}+\beta)n+1$.
\end{Lem}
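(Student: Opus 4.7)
The first claim is immediate: because $u \notin T_{low}$, the hypothesis $u \notin \mathcal{D}_i \setminus T_{low}$ reduces to $u \notin \mathcal{D}_i$, so $u$ lies in some $\mathcal{D}_j$ with $j \neq i$. For the size bound, the plan is to compare two quantities: the value $val^{\mathcal{D}}(u)$ of $u$ at its optimal position, and the value $pval^{\mathcal{D}}(u, i)$ it would achieve when moved into $\mathcal{D}_i$. These cannot differ by much, since the algorithm chose to place $u$ in $C_i$ and $\tilde S$ is $\beta$-good; yet their difference can also be computed directly and, provided $u$ is good, is tightly controlled by $|\mathcal{D}_i|$.

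I would first produce a lower bound on $pval^{\mathcal{D}}(u, i) - val^{\mathcal{D}}(u)$ by invoking Lemma~\ref{lemma:sample-set-is-ok} for every $0 \le j' \le l$ and taking the weighted sum with weights $j'/l$, obtaining a bound of the form $pval^{\mathcal{D}}(u, i) \ge val^{\mathcal{D}}(u) - c\beta$. I would then produce an upper bound by observing that the transition from $\mathcal{D}$ to $\mathcal{D}(u, i)$ changes the status only of the pairs $(u, x)$ with $x \in \mathcal{D}_i$ (which become co-clustered) or $x \in \mathcal{D}_j \setminus \{u\}$ (which become non-co-clustered), so that
\[
l(n-1)\bigl(pval^{\mathcal{D}}(u, i) - val^{\mathcal{D}}(u)\bigr) = \sum_{x \in \mathcal{D}_i}\bigl(s(x,u) - d(x,u)\bigr) - \sum_{x \in \mathcal{D}_j \setminus \{u\}}\bigl(s(x,u) - d(x,u)\bigr).
\]

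The goodness of $u$, i.e.\ $val^{\mathcal{D}}(u) > 1 - 1/(20k^2)$, is equivalent to $\sum_{x \in \mathcal{D}_j \setminus \{u\}} d(x,u) + \sum_{x \notin \mathcal{D}_j} s(x,u) < l(n-1)/(20k^2)$, which in particular controls both sub-sums above via the identity $s+d=l$. Plugging these estimates into the displayed identity yields an upper bound of the form $\frac{1}{10k^2} - \frac{|\mathcal{D}_i| + |\mathcal{D}_j| - 1}{n-1}$; chaining it with the lower bound and using $|\mathcal{D}_j| \ge 1$ then produces the claimed estimate on $|\mathcal{D}_i|$.

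The main delicate point will be the management of the factor $l$ that arises when the per-$j'$ guarantees of Lemma~\ref{lemma:sample-set-is-ok} are aggregated into a bound on $pval$ rather than on each $pval_{j'}$: the analogue in Lemma~4.4 of \cite{toc:v002/a013} was stated only for the unweighted one-partition case, where this factor does not appear. In our weighted setting the prescribed choice $\beta = \epsilon/(128\cdot 20^2 k^4)$ is small enough to absorb the extra dependence on $l$, so that the resulting inequality matches the form stated in the lemma.
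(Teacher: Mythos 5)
The first part of your argument (that $u \in \mathcal{D}_j$ for some $j \neq i$) is immediate and matches the paper. Your route to the size bound is a legitimate variant of the paper's: where the paper records the unconditional inequality
$pval^{\mathcal{D}}(u,j)+pval^{\mathcal{D}}(u,i)\le 2-\tfrac{|\mathcal{D}_i|+|\mathcal{D}_j|-1}{n-1}$
and then inserts $pval^{\mathcal{D}}(u,j)=val^{\mathcal{D}}(u)>1-\tfrac{1}{20k^2}$, you expand the difference
$pval^{\mathcal{D}}(u,i)-val^{\mathcal{D}}(u)=\tfrac{1}{l(n-1)}\bigl(\sum_{x\in\mathcal{D}_i}(s-d)-\sum_{x\in\mathcal{D}_j\setminus\{u\}}(s-d)\bigr)$
and then use goodness to bound the two sums. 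A short computation shows the two derivations land on the very same inequality $pval^{\mathcal{D}}(u,i)-val^{\mathcal{D}}(u)<\tfrac{1}{10k^2}-\tfrac{|\mathcal{D}_i|+|\mathcal{D}_j|-1}{n-1}$, so this part is only a cosmetic difference.

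The real problem is the lower bound on $pval^{\mathcal{D}}(u,i)$. You propose to get it by invoking Lemma~\ref{lemma:sample-set-is-ok} for each $j'\in\{0,\dots,l\}$ and aggregating with weights $j'/l$. But those weights sum to $(l+1)/2$, not to $1$, so the aggregated estimate is $pval^{\mathcal{D}}(u,i)\ge val^{\mathcal{D}}(u)-\beta(l+1)$, not $-2\beta$. You notice this, but then assert that $\beta=\epsilon/(128\cdot 20^2k^4)$ ``is small enough to absorb the extra dependence on $l$.'' That claim cannot be correct as stated: $\beta$ does not depend on $l$, and—more fatally—the \emph{conclusion} of the lemma is $|\mathcal{D}_i|\le 2(\tfrac{1}{20k^2}+\beta)n+1$, which contains the same $\beta$. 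If the error term is $\beta(l+1)$ rather than $2\beta$, the bound you can actually deduce is $|\mathcal{D}_i|\lesssim(\tfrac{1}{10k^2}+\beta(l+1))n$, which is strictly weaker than the target for every $l>1$; no choice of $\beta$ can repair this without also changing the lemma's statement. The correct resolution is not aggregation at all: one should prove $\beta$-goodness directly for the aggregated quantity $pval^{\mathcal{A}}(u,i)$ (the same Hoeffding argument as in Lemma~\ref{lemma:1} applies verbatim, since $pval$ is also an average of independent $[0,1]$-valued samples), and treat Lemma~\ref{lemma:sample-set-is-ok} as a statement about $pval$ rather than $pval_{j}$. This is evidently the paper's intent, since Lemma~\ref{lem:large-sets-can-be-extended} invokes Lemma~\ref{lemma:sample-set-is-ok} in exactly that form, $pval^{\mathcal{D}}(x,y)\ge val^{\mathcal{D}}(x)-2\beta$, with no $j$ subscript.
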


\begin{proof}
The proof is the same as in \cite{toc:v002/a013}, except for the observation
that, by our definition of $pval$ and since each pair
of elements involving $u$ is correctly co-clustered when $u$ is in either $\mathcal{D}_i$ or $\mathcal{D}_j$,
$pval^{\mathcal{D}}(u,j) + pval^{\mathcal{D}}(u,i)\le 2 - \frac{l(|\mathcal{D}_i|+|\mathcal{D}_j|-1)}{l(n-1)}$.
\end{proof}

\begin{Lem}
\label{lemma:good-are-optimal}
Let $i$ be an element in $Large$. If $\frac{n}{2k} - \gamma
n^2\frac{40k^2}{l(n-1)}  > 2(k+1)\left((\frac{1}{20k^2}+\beta)n+1\right)$ and
$2(\frac{1}{20k^2}+\beta)n+k < \frac{n}{2k} - \frac{80\gamma n k^2}{l}$
then $C_i\setminus T_{low}=\mathcal{D}_i\setminus T_{low}$.
\end{Lem}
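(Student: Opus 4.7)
The plan is to argue by contradiction: suppose some $i \in Large$ satisfies $C_i \setminus T_{low} \neq \mathcal{D}_i \setminus T_{low}$. Then either (A) some good $v \in C_i$ lies outside $\mathcal{D}_i$, or (B) some good $u \in \mathcal{D}_i$ lies outside $C_i$. I first reduce (B) to (A), then contradict (A) by bounding $|C_i \setminus T_{low}|$ from above and from below.

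For the reduction, in case (B) we have $u \in C_s$ for some $s \neq i$; the greedy placement of $u$ in $C_s$ combined with the $\beta$-goodness of the sample yields $pval^{\mathcal{D}}(u,s) > 1 - \tfrac{1}{20k^2} - 2\beta$, and since $u$ is good also $pval^{\mathcal{D}}(u,i) > 1 - \tfrac{1}{20k^2}$. Substituting these into the upper bound $pval^{\mathcal{D}}(u,s) + pval^{\mathcal{D}}(u,i) \leq 2 - \tfrac{|\mathcal{D}_s| + |\mathcal{D}_i| - 1}{n-1}$ extracted from the proof of Lemma \ref{lemma:different-large-and-good-are-small} gives the stronger two-cluster bound $|\mathcal{D}_s| + |\mathcal{D}_i| \leq 2(\tfrac{1}{20k^2} + \beta)n + 1$. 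In particular $|\mathcal{D}_i|$ is small, while $|T_{low}| \leq \tfrac{80\gamma n k^2}{l}$ together with the second hypothesis gives $|C_i \setminus T_{low}| \geq \tfrac{n}{2k} - \tfrac{80\gamma n k^2}{l} > 2(\tfrac{1}{20k^2} + \beta)n + k > |\mathcal{D}_i|$, so some good $v \in C_i$ must lie outside $\mathcal{D}_i$, i.e., case (A) holds.

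For the contradiction in case (A), let $J = \{j \neq i : \mathcal{D}_j \cap C_i \cap (V \setminus T_{low}) \neq \emptyset\}$. Applying the same pval-sum argument to any good element of $C_i \cap \mathcal{D}_j$ gives $|\mathcal{D}_j| \leq 2(\tfrac{1}{20k^2} + \beta)n + 1$ for every $j \in J$, and Lemma \ref{lemma:different-large-and-good-are-small} itself gives the analogous bound on $|\mathcal{D}_i|$. Since $|J \cup \{i\}| \leq k$,
\[
|C_i \setminus T_{low}| \leq |\mathcal{D}_i| + \sum_{j \in J} |\mathcal{D}_j| \leq k\bigl(2(\tfrac{1}{20k^2} + \beta)n + 1\bigr) \leq 2(k+1)\bigl((\tfrac{1}{20k^2} + \beta)n + 1\bigr).
\]
On the other hand, $|T_{low}| \leq \tfrac{40\gamma n^2 k^2}{l(n-1)}$ combined with the first hypothesis forces $|C_i \setminus T_{low}| \geq \tfrac{n}{2k} - \tfrac{40\gamma n^2 k^2}{l(n-1)} > 2(k+1)((\tfrac{1}{20k^2} + \beta)n + 1)$, contradicting the previous display. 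The main obstacle is the asymmetry of Lemma \ref{lemma:different-large-and-good-are-small}, which bounds only the wrongly assigned cluster $|\mathcal{D}_i|$ and not the element's true cluster $|\mathcal{D}_j|$; one must reopen its proof to extract the stronger two-cluster bound, and then use that bound both for the $(B)\!\to\!(A)$ reduction and for estimating $|\mathcal{D}_j|$ for each $j \in J$.
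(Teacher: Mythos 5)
Your proof is correct and uses the same underlying tools as the paper's, but packages them a little differently. The paper argues the two inclusions $C_i\setminus T_{low}\subseteq\mathcal{D}_i\setminus T_{low}$ (Part 1, from the first hypothesis) and $C_i\setminus T_{low}\supseteq\mathcal{D}_i\setminus T_{low}$ (Part 2, from the second hypothesis) by two separate contradictions; you instead reduce case (B) to case (A), so that one contradiction suffices, with the second hypothesis used only to force the existence of a good element of $C_i$ outside $\mathcal{D}_i$. For the contradiction itself, you upper-bound $|C_i\setminus T_{low}|$ by $|\mathcal{D}_i|+\sum_{j\in J}|\mathcal{D}_j|$ and lower-bound it by $|C_i|-|T_{low}|\ge n/(2k)-|T_{low}|$, whereas the paper sandwiches $|\mathcal{D}_i|$ instead; these are algebraically equivalent rearrangements of the same counting. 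Your observation about Lemma~\ref{lemma:different-large-and-good-are-small} is a genuine and worthwhile catch: as stated it only bounds $|\mathcal{D}_i|$ (the cluster the algorithm wrongly assigned $u$ to), yet both your argument and the paper's actually rely on the bound for $|\mathcal{D}_j|$ (the true cluster of each misplaced element) — the paper uses it tacitly when it declares ``both $\mathcal{D}_1$ and $\mathcal{D}_2$'' small in Part 2, and when it bounds $|C_1\setminus(\mathcal{D}_1\cup T_{low})|$ by $2(\tfrac{1}{20k}+\beta k)n+k$ in Part 1. Reopening the lemma's proof to extract the two-cluster bound $|\mathcal{D}_i|+|\mathcal{D}_j|\le 2(\tfrac{1}{20k^2}+\beta)n+1$, as you do, is exactly the right fix and makes explicit what the paper leaves implicit.
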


\begin{proof}
  Let  $x\in V\setminus T_{low}$. W.l.o.g. we can assume that $x\in C_1\setminus T_{low}$ and $x\in
\mathcal{D}_1 \cup T_{low}$. First we will prove that $C_1\setminus T_{low} \subseteq \mathcal{D}_1\setminus T_{low}$.
Assume to the contrary that there exists a $y \in C_1$, $y\notin \mathcal{D}_1, T_{low}$,
therefore (w.l.o.g.) $y\in \mathcal{D}_2$.
By Lemma~\ref{lemma:different-large-and-good-are-small}, and since
there are at most $k$ sets in $\mathcal{D}$, then $|C_1 \setminus (\mathcal{D}_1\cup T_{low})|\le
2(\frac{1}{20k}+\beta k)n+k$.

Since $C_1 \setminus (\mathcal{D}_1\cup T_{low}) = (C_1  \setminus
\mathcal{D}_1) \setminus  T_{low} = (C_1 \setminus  T_{low})  \setminus
\mathcal{D}_1$
then $|\mathcal{D}_1|\ge |C_1\setminus T_{low}| \setminus |C_1 \setminus (\mathcal{D}_1\cup T_{low})|
\ge |C_1| \setminus |T_{low}| \setminus |C_1 \setminus (\mathcal{D}_1\cup T_{low})| \ge
\frac{n}{2k} - \gamma n^2\frac{40k^2}{l(n-1)}  - 2(\frac{1}{20k}+\beta k)n
+k$. But
$\frac{n}{2k} - \gamma n^2\frac{40k^2}{l(n-1)}  - 2(\frac{1}{20k}+\beta
k)n+k> 2(\frac{1}{20k^2}+\beta)n+1$,
 which
contradicts  $|\mathcal{D}_1| \le 2(\frac{1}{20k^2}+\beta)n+1$.
In fact
$\frac{n}{2k} - \gamma n^2\frac{40k^2}{l(n-1)}  - 2(\frac{1}{20k}+\beta
k)n+k> 2(\frac{1}{20k^2}+\beta)n+1$
can be rewritten as
$\frac{n}{2k} - \gamma n^2\frac{40k^2}{l(n-1)}  > 2(k+1)\left((\frac{1}{20k^2}+\beta)n+1\right)$.

Now we know that $C_1\setminus T_{low} \subseteq \mathcal{D}_1\setminus T_{low}$ and we would like to prove that
$C_1\setminus T_{low} \supseteq \mathcal{D}_1\setminus T_{low}$, along the same lines as for the first part.
Assume to the contrary that there exists a $y \in \mathcal{D}_1$, $y\notin C_1, T_{low}$,
therefore (w.l.o.g.) $y\in C_2$.
Again by Lemma~\ref{lemma:different-large-and-good-are-small}, both $\mathcal{D}_1$
and $\mathcal{D}_2$ have at most $2(\frac{1}{20k^2}+\beta)n+1$ elements.
Notice that $C_1 \setminus T_{low} \subseteq  \mathcal{D}_1$, since  $C_1\setminus T_{low} \subseteq \mathcal{D}_1\setminus T_{low}$,
moreover $C_1$ is large, therefore $|C_1|\ge \frac{n}{2k}$. By the value
of $|T_{low}|$, $2(\frac{1}{20k^2}+\beta)n+k \ge \frac{n}{2k} - \frac{80\gamma
  n k^2}{l}$ which does not hold by hypothesis.
\end{proof}

Now we are able to show that there is a solution where some sets are exactly
the large sets in ClusMin and whose cost is not much larger than the optimum.
This fact justifies the recursive step of the algorithm. The condition under
which the lemma holds will be proved at the end of the section.

\begin{Lem}
\label{lem:large-sets-can-be-extended}
If
$l(n-1)|T_{low}|\left( 2\beta  +\frac{|T_{low}|}{l(n-1)}\right)\le \frac{\epsilon}{2}\gamma n^2$, then
there exists a
solution $F=\{F_1, \ldots , F_k\}$ such that the cost of $F$ is at most
$\gamma n^2(1+\epsilon/2)$ and $F_i=C_i$ for each $i$ in Large.
\end{Lem}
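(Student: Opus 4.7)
The plan is to construct $F$ by starting from the optimal partition $\mathcal{D}$ and relocating only bad elements (those in $T_{low}$). Concretely, I set $F_i = C_i$ for each $i \in Large$; for each $j \in Small$ I place the good elements of $\mathcal{D}_j$ in $F_j$ and complete the partition by assigning each remaining bad element to the $F$-set corresponding to the algorithm's placement. Lemma~\ref{lemma:good-are-optimal} tells us $C_i \setminus T_{low} = \mathcal{D}_i \setminus T_{low}$ for $i \in Large$, which implies that every good element sits in the same cluster under $F$ and $\mathcal{D}$, and that $F$ is a well-defined partition of $V$ into at most $k$ sets.

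The central task is to bound $cost(F) - cost(\mathcal{D})$. Since good--good pairs are co-clustered identically in $F$ and $\mathcal{D}$, only pairs with at least one bad endpoint contribute. For a moved bad element $v$ (with $\mathcal{D}$-cluster $\mathcal{D}_s$ and $F$-cluster $F_i$), the cost change summed over pairs $(v,u)$ equals $l(n-1)(val^{\mathcal{D}}(v) - pval^{\mathcal{D}}(v,i))$, and the pivotal estimate $val^{\mathcal{D}}(v) - pval^{\mathcal{D}}(v,i) \le 2\beta$ is obtained in the spirit of Lemma~\ref{lemma:sample-set-is-ok}: the algorithm chose $i$ so as to maximize $pval^{\bar{S}}(v,\cdot)$, so $pval^{\bar{S}}(v,i) \ge pval^{\bar{S}}(v,s) = val^{\bar{S}}(v)$, and two applications of the $\beta$-goodness of $\tilde{S}$ let us transfer the inequality between $\bar{S}$-values and $\mathcal{D}$-values. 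Summing over the at most $|T_{low}|$ moved bad elements gives an upper bound of $2\beta l(n-1)|T_{low}|$, matching the first term of the hypothesis.

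The remaining contribution comes from pairs with both endpoints in $T_{low}$: there are at most $\binom{|T_{low}|}{2}$ such pairs, and standard bookkeeping absorbs their total contribution into the $|T_{low}|^2$ slack provided by the hypothesis. Adding the two bounds yields
\[
cost(F) - cost(\mathcal{D}) \le l(n-1)|T_{low}|\left(2\beta + \frac{|T_{low}|}{l(n-1)}\right) \le \frac{\epsilon}{2}\gamma n^2,
\]
and therefore $cost(F) \le \gamma n^2(1+\epsilon/2)$, as required. The main obstacle I foresee is the estimate $val^{\mathcal{D}}(v) - pval^{\mathcal{D}}(v,i) \le 2\beta$: passing from the per-$j$ bound of Lemma~\ref{lemma:sample-set-is-ok} to a bound on the summed quantity $pval = \frac{1}{l}\sum_j j\cdot pval_j$ requires exploiting that the per-$j$ deviations $pval_j^{\mathcal{D}}(v,i) - pval_j^{\bar{S}}(v,i)$ sum to zero in $j$ (both are probability distributions) in order to pin the worst-case value of the weighted average to $\beta$ rather than $O(\beta l)$; the remainder is routine.
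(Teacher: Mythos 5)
Your overall plan mirrors the paper's: build $F$ to agree with the $C_i$ on large indices and with $\mathcal{D}$ on good elements, then bound $cost(F)-cost(\mathcal{D})$ by isolating bad--good pairs (estimated via a per-element $2\beta$ bound) from bad--bad pairs (at most $|T_{low}|^2$, which supplies the second term of the hypothesis). Your construction places bad elements in small clusters where the algorithm put them rather than where $\mathcal{D}$ does, as the paper's $F$ does; this difference is harmless, since both constructions satisfy $F_i=C_i$ on $Large$ and the same two-term accounting applies.

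The genuine gap is exactly where you flagged it, but your proposed fix fails. From $\beta$-goodness you only get $|\delta_j|\le\beta$ for $\delta_j=pval_j^{\mathcal{D}}(v,i)-pval_j^{\bar{S}}(v,i)$ (and, via Lemma~\ref{lemma:sample-set-is-ok}, a one-sided $-2\beta$ bound), and the constraint $\sum_{j=0}^l\delta_j=0$ does \emph{not} pin $\frac{1}{l}\sum_j j\,\delta_j$ to $O(\beta)$: taking $\delta_j=-\beta$ for $j\le l/2$ and $\delta_j=+\beta$ for $j>l/2$ gives a weighted sum of order $\beta l$. The zero-sum constraint only lets you recenter the weights, i.e.\ write $\frac{1}{l}\sum_j j\,\delta_j=\sum_j(\tfrac{j}{l}-c)\delta_j$; the best choice $c=\tfrac12$ still leaves weights of magnitude up to $\tfrac12$, so you recover at best $\tfrac{(l+1)\beta}{2}$, a constant-factor saving, not a factor of $l$. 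The clean way to close this is to concentrate $pval$ itself rather than each $pval_j$: $pval^{\bar{S}}(v,i)$ is the sample mean of the $[0,1]$-bounded quantities $w(x,v)/l$ over $x\in S$ (where $w(x,v)\in\{s(x,v),d(x,v)\}$ depending on co-clustering in $\mathcal{D}(v,i)$), and Hoeffding's inequality for bounded variables yields $|pval^{\bar{S}}(v,i)-pval^{\mathcal{D}}(v,i)|\le\beta$ with the same sample size, from which the chain $pval^{\mathcal{D}}(v,j_v)\ge pval^{\bar{S}}(v,j_v)-\beta\ge pval^{\bar{S}}(v,s)-\beta\ge val^{\mathcal{D}}(v)-2\beta$ follows. (For what it is worth, the paper's own proof of this lemma silently invokes a $pval$-level version of Lemma~\ref{lemma:sample-set-is-ok}, which is stated only for $pval_j$, so the same repair is needed there.)
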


\begin{proof}
Let $F$ be the solution consisting of all large sets in ClusMin and where all remaining
elements are partitioned as in $\mathcal{D}$. Clearly the only pairs of elements that
might not be  partitioned in $F$ as in ClusMin are the ones containing at least
one element of $T_{low}$, by Lemma~\ref{lemma:good-are-optimal}. By the
definition of $val$, $cost(F)-cost(\mathcal{D})\le l(n-1)\sum_{x\in
  T_{low}}\left(val^{\mathcal{D}}(x) - val^{F}(x)\right)$.

We have to consider two different cases, depending on the fact that $x\in T_{low}$
belongs to sets $C_i$, $\mathcal{D}_i$ for a certain $i$, or not. In the first
case w.l.o.g. $x$ is in both $C_1$ and $\mathcal{D}_1$
the set of pairs that are different in ClusMin and in $\mathcal{D}$, are only pairs of
the form $(x,y)$ where $y\in T_{low}$, which in turn implies that $val^{F}(x) \ge
val^{\mathcal{D}}(x) - \frac{|T_{low}|}{l(n-1)}$. In the second case we can assume
w.l.o.g. that $x\in C_1$ and $x\in \mathcal{D}_2$. Applying
Lemma~\ref{lemma:sample-set-is-ok} we know that $pval^{\mathcal{D}}(x,y)\ge
val^{\mathcal{D}}(x)- 2\beta$. Also notice that in
$\mathcal{D}(x,1)$ and $F$, the element $x$
belong to the same set therefore, just as for the first case, $val^{F}(x) \ge
val^{\mathcal{D}(x,2)}(x) - \frac{|T_{low}|}{l(n-1)}$, but $val^{\mathcal{D}(x,2)}(x) =
pval^{\mathcal{D}}(x,2)$. Combining all inequalities we obtain $val^{F}(x) \ge
pval^{\mathcal{D}}(x,2) - \frac{|T_{low}|}{l(n-1)} \ge val^{\mathcal{D}}(x) - 2\beta
-\frac{|T_{low}|}{l(n-1)}$, where the last inequality comes from
Lemma~\ref{lemma:sample-set-is-ok}. In both cases we can say that $val^{F}(x) \ge
pval^{\mathcal{D}}(x,2) - \frac{|T_{low}|}{l(n-1)} \ge val^{\mathcal{D}}(x) - 2\beta
-\frac{|T_{low}|}{l(n-1)}$.
An immediate consequence is that  $cost(F)-cost(\mathcal{D})$ is at most
$l(n-1)\sum_{x\in T_{low}}\left(val^{F}(x) - val^{\mathcal{D}}(x)\right) \le l(n-1)
|T_{low}|\left( 2\beta  +\frac{|T_{low}|}{l(n-1)}\right)$. The claim follows since
$l(n-1)
|T_{low}|\left( 2\beta  +\frac{|T_{low}|}{l(n-1)}\right)\le \gamma n^2\epsilon/2$.
\end{proof}

Since the partitions $F$ and ClusMin are the same for all pairs where at least
one element is in a large set of ClusMin, an immediate consequence is that the
solution returned by the algorithm has cost at most
$\gamma n^2(1+\epsilon/3)(1+\epsilon/2)$ which is at most equal to $\gamma n^2(1+\epsilon)$
for any sufficiently small $\epsilon$.
%
%
%
The following technical result completes our proof by showing that
Lemma~\ref{lem:large-sets-can-be-extended} holds. The proof is a  mechanical
consequences of the values of $\beta$, $|T_{low}|$ and $\epsilon'$.

\begin{Lem}
\label{lemma:2}
$l(n-1)
|T_{low}|\left( 2\beta  +\frac{|T_{low}|}{l(n-1)}\right)\le \frac{\epsilon}{2} \gamma n^2$.
\end{Lem}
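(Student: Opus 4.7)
The plan is to reduce the inequality to an elementary calculation by substituting the explicit bounds on $|T_{low}|$, $\beta$, and $\gamma$ that have already been established earlier in the section. After expanding the left-hand side, the claim splits naturally into two summands,
\[
2\beta\, l(n-1)\,|T_{low}| \quad\text{and}\quad |T_{low}|^2,
\]
and I would bound each separately by $\frac{\epsilon}{4}\gamma n^2$.

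For the first summand, I would substitute $|T_{low}| \le \frac{80\gamma n k^2}{l}$ (the counting bound derived right after the definition of $T_{low}$) to cancel the factor of $l$ and obtain a quantity proportional to $\beta \gamma n^2 k^2$. Plugging in $\beta = \frac{\epsilon}{128\cdot 20^2 k^4}$ causes the $k^2$ factor to be absorbed and leaves a constant fraction of $\epsilon\gamma n^2$ that is easily seen to be at most $\frac{\epsilon}{4}\gamma n^2$.

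For the second summand, squaring the bound on $|T_{low}|$ gives a term of order $\frac{\gamma^2 n^2 k^4}{l^2}$. Here the extra factor of $\gamma$ is the crucial one: since we are in the regime where $\gamma n^2 \le \epsilon' n^2$, i.e.\ $\gamma \le \epsilon' = \frac{\epsilon}{128\cdot 20^2 k^4}$, one factor of $\gamma$ can be replaced by $\epsilon'$, which kills the remaining $k^4$ and leaves a quantity at most $\frac{\epsilon}{4}\gamma n^2$ (with some slack depending on $l \ge 1$). Adding the two bounds yields the claim.

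I do not expect any conceptual obstacle: the lemma is purely arithmetic. The only subtle point is to remember that we are working in the case $\gamma \le \epsilon'$ analyzed earlier (the complementary case having been disposed of by the additive PTAS of \cite{DBLP:journals/jcss/BonizzoniVDJ08}), since this is exactly what is needed to control the $|T_{low}|^2$ term; without it, the quadratic dependence on $|T_{low}|$ would dominate $\gamma n^2$.
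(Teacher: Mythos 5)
Your proposal is correct and takes essentially the same route as the paper: substitute the bound $|T_{low}|\le O(\gamma n k^2/l)$ and the value of $\beta$, invoke $\gamma\le\epsilon'$ to kill the extra $k$-factors, and reduce to elementary arithmetic. The split into the two summands $2\beta\,l(n-1)|T_{low}|$ and $|T_{low}|^2$ is a natural bookkeeping variant of the paper's single chain of substitutions, and you correctly identify $\gamma\le\epsilon'$ as the ingredient needed for the quadratic term.
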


\begin{proof}
Since $|T_{low}|\le \frac{40\gamma nk^2}{l}$ and
$\beta=\frac{\epsilon}{20\cdot 160 k^2 l}$, it suffices to prove that
$l(n-1)\frac{40\gamma nk^2}{l}\left( \frac{\epsilon}{20\cdot 80k^2l}
  + \frac{40\gamma nk^2}{l^2(n-1)}\right)\le \frac{\epsilon}{2}
\gamma n^2$ that is equivalent to
$\frac{80 (n-1)k^2}{l}\left( \frac{\epsilon}{20\cdot 80k^2} +
  +\frac{40k^2 \gamma n}{l(n-1)}\right)\le \epsilon n$.
Since we are only interested in instances where the algorithm of
\cite{DBLP:journals/jcss/BonizzoniVDJ08} fails to provide a $(1+\epsilon)$
approximation ratio, we can assume that $\gamma <
\epsilon'=\frac{\epsilon}{128\cdot 20^2 k^4}$, consequently it suffices to
prove that
$\frac{80(n-1)k^2}{l}\left( \frac{\epsilon}{20\cdot 80 k^2 } + \frac{2 \cdot 20 k^2
    \epsilon n }{128\cdot 20^2 l (n-1) k^4}\right)\le \epsilon n$
that is equivalent to
$\frac{4(n-1)}{l}\left( \frac{1}{80} + \frac{n}{64  l (n-1) }\right)\le n$
which in turn is equivalent to $9n \le 80ln+4n$ which is trivially true.
\end{proof}

To complete the section and the analysis of the algorithm, we need to prove
that the assumptions that we have made in some of the previous lemmas actually
hold. The proofs are mechanical and quite tedious consequences of the values
of $\beta$, $\gamma$ and $\epsilon'$.

\begin{Lem}
\label{lemma:good-are-optimal-assumption}
If $n\ge 16k^2$ then
$\frac{n}{2k} - \gamma n^2\frac{40k^2}{l(n-1)}  > 2(k+1)\left((\frac{1}{20k^2}+\beta)n+1\right)$.
\end{Lem}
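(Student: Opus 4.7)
The plan is to reduce the claimed inequality to a simple comparison between the dominant terms on either side, then absorb the small additive contributions using the given bounds on $\beta$, $\gamma$, and the hypothesis $n\ge 16k^2$. Recall that at this point of the paper we may assume $\gamma<\epsilon'=\frac{\epsilon}{128\cdot 20^2 k^4}$, since otherwise the PTAS from \cite{DBLP:journals/jcss/BonizzoniVDJ08} already gives the required ratio, and that $\beta$ has the same order of magnitude as $\epsilon'$.

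First I would rewrite the right-hand side as
\[
\frac{(k+1)n}{10k^2}+2(k+1)\beta n+2(k+1),
\]
and the left-hand side as $\frac{n}{2k}$ minus a correction term. For $n\ge 16k^2\ge 2$ we have $\frac{n^2}{n-1}\le 2n$, so substituting $\gamma<\epsilon'$ bounds the subtracted term on the left by $\frac{80 k^2\epsilon' n}{l}$, which is of order $\epsilon n/(lk^2)$ and thus negligible compared with $\frac{n}{2k}$. The left-hand side is therefore essentially $\frac{n}{2k}$ up to a multiplicative factor arbitrarily close to $1$ (say at least $\frac{319}{320}$) for $\epsilon\le 1$.

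Next I would estimate the dominant term on the right: since $\frac{k+1}{10k^2}=\frac{1}{10k}+\frac{1}{10k^2}\le \frac{1}{5k}$ for every $k\ge 1$, we get $\frac{(k+1)n}{10k^2}\le \frac{n}{5k}$, which is at most two fifths of the leading term $\frac{n}{2k}$ on the left. The term $2(k+1)\beta n$ is of order $\epsilon n/k^3$ by the value of $\beta$, and therefore easily absorbed. The last additive constant $2(k+1)\le 4k$ is dominated by $\frac{n}{2k}\ge 8k$ thanks to the hypothesis $n\ge 16k^2$.

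Combining these estimates, the inequality reduces to something of the form $\bigl(\tfrac{1}{2}-o(1)\bigr)\tfrac{n}{k}>\bigl(\tfrac{1}{5}+o(1)\bigr)\tfrac{n}{k}$, which holds comfortably. The proof is entirely a matter of careful bookkeeping of the constants $128$, $20$, $40$, $k$ and $\epsilon$; the only mild obstacle is keeping track of the interplay between the three small parameters $\beta$, $\gamma$ and $\epsilon$ so that each small contribution can be shown to be beaten by a definite fraction of the leading term $\frac{n}{2k}$, which the assumption $n\ge 16k^2$ makes possible.
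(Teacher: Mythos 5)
Your proof is correct and follows essentially the same route as the paper's: both substitute $\gamma<\epsilon'$, use $n^2/(n-1)\le 2n$ to make the left-hand correction term negligible, bound the dominant right-hand term via $\frac{k+1}{10k^2}\le\frac{1}{5k}$, absorb the $\beta$ term, and finish off the additive constant $2(k+1)$ with $n\ge 16k^2$. One small caveat: calling the $2(k+1)$ term ``$o(1)$'' relative to $n/k$ is a slight overstatement, since with $n$ only barely above $16k^2$ it can be as large as roughly $\frac{1}{4}\cdot\frac{n}{2k}$; the inequality still closes comfortably ($\frac{1}{2}>\frac{1}{5}+\frac{1}{4}$), but it is a definite fraction of the leading term rather than a vanishing one.
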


\begin{proof}
By the values of $\gamma$ and $\beta$, and since we can assume that $\gamma <
\epsilon'=\frac{\epsilon}{128\cdot 20^2 k^4}$, the inequality can be rewritten as
$\frac{n}{2k} - \frac{\epsilon}{128\cdot 20^2 k^4} n^2\frac{40k^2}{l(n-1)}  >
2(k+1)\left((\frac{1}{20k^2}+\frac{\epsilon}{128\cdot 20^2 k^4})n+1\right)$
which can be simplified as
$\frac{n}{2k} - \frac{\epsilon}{64\cdot 20 k^2} n^2\frac{1}{l(n-1)}  >
2(k+1)\left((\frac{1}{20k^2}+\frac{\epsilon}{128\cdot 20^2 k^4})n+1\right)$.
Since $\frac{n}{n-1}\le 2$, it suffices to prove that
$\frac{n}{2k}\left(1 - \frac{\epsilon }{16\cdot 20 k l} \right) >
2(k+1)\left((\frac{1}{20k^2}+\frac{\epsilon}{128\cdot 20^2 k^4})n+1\right)$.
As $k,l\ge 2$ and $\epsilon$ is tiny, $\frac{\epsilon }{16\cdot 20 k l}<
\frac{1}{1000}$, therefore we are only interested in proving that
$\frac{999}{1000}\cdot \frac{n}{2k}>
2(k+1)\left((\frac{1}{20k^2}+\frac{\epsilon}{128\cdot 20^2 k^4})n+1\right)$
which is equivalent to
$\frac{999}{1000}\cdot \frac{n}{2k}>
(k+1)(\frac{1}{10k^2}+\frac{\epsilon}{64\cdot 20^2 k^4})n+2(k+1)$. Again,
$\frac{k+1}{k}\le 2$, therefore it is sufficient to prove that
$\frac{999}{1000}\cdot \frac{n}{2k}>
(\frac{1}{5k}+\frac{\epsilon}{32\cdot 20^2 k^3})n+2(k+1)$ which is equivalent
to
$\frac{599}{1000}\cdot \frac{n}{2k}>
\frac{\epsilon}{32\cdot 20^2 k^3}n+2(k+1)$. Since $k\ge 2$,
$\frac{\epsilon}{32\cdot 20^2 k^3}<\frac{1}{1000}$, hence  it suffices to
prove that $\frac{n}{4k}>2(k+1)$ which is an immediate consequence of the
assumption $n\ge 16k^2$.
\end{proof}

\begin{Lem}
\label{lemma:good-are-optimal-assumption2}
If $n\ge 16k^2$ then
$2(\frac{1}{20k^2}+\beta)n+k < \frac{n}{2k} - \frac{80\gamma n k^2}{l}$.
\end{Lem}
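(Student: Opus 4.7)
The plan is to imitate the mechanical simplification pattern used in Lemma~\ref{lemma:good-are-optimal-assumption}: substitute the explicit values of $\beta$ and the bound $\gamma < \epsilon' = \frac{\epsilon}{128\cdot 20^2 k^4}$, then show that the two $\epsilon$-weighted correction terms are negligible compared with the main terms, reducing everything to a clean polynomial inequality that $n\ge 16k^2$ handles with room to spare.

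First, I would expand the left-hand side into $\frac{n}{10k^2} + 2\beta n + k$ and rewrite the right-hand side, using $\gamma<\epsilon'$, as at least $\frac{n}{2k} - \frac{80\epsilon' n k^2}{l} = \frac{n}{2k} - \frac{\epsilon n}{32\cdot 20^2 k^2 l}$. So it suffices to prove
\[
\frac{n}{10k^2} + \frac{\epsilon n}{64\cdot 20^2 k^4} + k \;<\; \frac{n}{2k} - \frac{\epsilon n}{32\cdot 20^2 k^2 l}.
\]
Since $k,l\ge 2$ and $\epsilon$ is arbitrarily small, both $\epsilon$-factors are bounded by, say, $\frac{1}{1000}\cdot\frac{n}{2k}$, so after absorbing them into the gap between $\frac{n}{10k^2}$ and $\frac{n}{2k}$ it is enough to show, with a mild slack,
\[
\frac{n}{10k^2} + k \;<\; \frac{n}{2k}\left(1 - \tfrac{1}{500}\right).
\]

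Rearranging gives $k < \frac{n(5k-1)}{10k^2}\bigl(1 - \tfrac{1}{500}\bigr)$ (roughly), which in turn reduces to an inequality of the shape $n \ge c\,k^2$ for a small absolute constant $c$; taking $k\ge 2$ into account so that $5k-1\ge 9$, a short computation confirms that $c$ can be chosen well below $16$, so the hypothesis $n\ge 16k^2$ closes the argument. The main (only) obstacle is purely arithmetic bookkeeping—keeping track of the constants $20$, $128$, $160$, etc., which appear in $\beta$ and $\epsilon'$—and this is exactly the same kind of mechanical computation carried out in Lemmas~\ref{lemma:2} and \ref{lemma:good-are-optimal-assumption}, so no conceptually new idea is required.
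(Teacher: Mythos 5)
Your proposal is correct and takes essentially the same approach as the paper's proof: substitute the explicit values of $\beta$ and use $\gamma < \epsilon'$, dismiss the tiny $\epsilon$-weighted correction terms as negligible for $k,l\ge 2$, and reduce to a clean polynomial inequality that $n\ge 16k^2$ settles with ample room. One small arithmetic slip to note: $\frac{80\epsilon' k^2}{l} = \frac{\epsilon}{32\cdot 20\, k^2 l}$, not $\frac{\epsilon}{32\cdot 20^2 k^2 l}$, but since you then bound this term by $\frac{1}{1000}\cdot\frac{n}{2k}$, a bound which the corrected constant also satisfies for small $\epsilon$, the conclusion is unaffected.
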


\begin{proof}
By the values of $\gamma$ and $\beta$ the inequality can be rewritten as
$2(\frac{1}{20k^2}+\frac{\epsilon}{128\cdot 20^2 k^4})n+k < \frac{n}{2k} -
\frac{80 n k^2}{l}\frac{\epsilon}{128\cdot 20^2 k^4}$ which can be simplified
as
$\frac{n}{k}\left( \frac{1}{5k}+\frac{\epsilon }{32\cdot 20^2 k^3}+
  \frac{\epsilon }{16\cdot 20 lk}\right)+2k < \frac{n}{k}$. As $k,l\ge 2$, it
is immediate to notice that $\frac{1}{5k}+\frac{\epsilon }{32\cdot 20^2 k^3}+
  \frac{\epsilon }{16\cdot 20 lk}\le \frac{1}{4}$, therefore it suffices to
  prove that $2k<\frac{3n}{4k}$, which is an immediate consequence of the
assumption $n\ge 16k^2$.
\end{proof}

\section{NP-hardness}
\label{NP-hard-proof}

In this section we prove that \textsc{2-Min-CC}($3$) 
is NP-hard. From the NP-hardness of \textsc{2-MIN-CC},
it is easy to  show that also \textsc{k-MIN-CC}($3$) is NP-hard for any fixed $k$.
Our proof consists of a reduction from
the NP-hard Min Bisection Problem (MIN-BIS) to \textsc{2-MIN-CC}($3$).
The MIN-BIS problem, given a graph $G=(V,E)$, asks for a partitioning of $V$
in two equal-sized sets
so that the number of edges connecting vertices in different sets is minimized.

For our purposes, in this section we give a different, but equivalent,
definition of  cost  of a solution $\pi$ of
\textsc{Minimum Consensus Clustering} over instance $\Pi$ can be
alternatively defined as:
\begin{equation}\label{rel}
\sum_{\forall (i<j)} (r_{\pi}(i,j) d_{\Pi}(i, j) +
(1-r_{\pi}(i,j)) s_{\Pi}(i, j)),
\end{equation}
where $ r_{\pi}(i,j) = 1$ iff  $(i,j)$ are co-clustered in $\pi$,
otherwise $ r_{\pi}(i,j) = 0$. The above formula will be used in
the paper (see Section 4) to define the cost of a set $P$ of pairs
in a solution $\pi$  as $\sum_{\forall (i,j)\in P} (r_{\pi}(i,j)
d_{\Pi}(i, j) + (1-r_{\pi}(i,j)) s_{\Pi}(i, j))$.

Given an instance $G=(V,E)$  of MIN-BIS, where $|V|=n$ and $|E|=m$,
we build an instance of  \textsc{2-MIN-CC}($3$) as follows.

First we define the universe set $V$.
For each $v_i \in V$, we define a set of $n^4$ elements $X_i=\{ x_{i,1},\dots x_{i,n^4} \}$,
and a set of $n$ elements $Y_{i}=  \{ y_{i,1}, \dots, y_{i,n} \}$.
The universe set is $V=(\cup_i X_i \cup Y_i)$.
Next we define the three input partitions of  \textsc{2-MIN-CC}($3$),
$\Pi=\{ \pi_1, \pi_2 ,\pi_3 \}$.
Partitions $\pi_1$ and $\pi_2$ are identical and consist of $n$ disjoint sets
$ X_i \cup Y_i $, with $i=1,\ldots, n$.
The partition $\pi_3$ contains the sets $X_i$, moreover
for each edge $(v_i,v_j)\in E$, in $\pi_3$ we have the set $\{y_{i,h},
y_{j,l}\}$ consisting of two elements taken respectively from $Y_i$ and $Y_j$
(the actual elements taken are not important, provided that $\pi_3$ is a
partition of the universe set -- which is trivial to obtain).
Finally,  in $\pi_3$ we have a singleton  for each element of $\cup Y_i$ that
are not in a two-element set according to the previous rule.

\begin{Obs}
\label{obs:3-sets}
Since all the elements in $X_i$ are co-clustered in all input partitions,
each $X_i$ is contained in a set of the optimal solution.
\end{Obs}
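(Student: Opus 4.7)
The plan is to show that in any solution with a split $X_i$ (some elements in one cluster, the rest in the other), we can strictly reduce the cost by moving all of $X_i$ to whichever cluster is better. The core observation driving the argument is a uniformity property: for every $z \notin X_i$, the value $s(x,z)$ does not depend on the choice of $x \in X_i$. Indeed, $X_i$ appears as a subset of some set in each of $\pi_1,\pi_2,\pi_3$, so for a fixed $z$ the co-clustering status of $(x,z)$ in each input partition is determined by $z$ alone. Hence $s(x,z)$ is a function of $z$, and likewise for $d(x,z)=3-s(x,z)$.

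Given any candidate solution, suppose $X_i$ is split into $a$ elements in cluster $C_2$ and $n^4-a$ elements in $C_1$, for some $0<a<n^4$. I would decompose the total cost into three parts: (i) pairs with both endpoints in $X_i$, (ii) pairs with one endpoint in $X_i$ and one in $V\setminus X_i$, and (iii) pairs with both endpoints in $V\setminus X_i$. Part (iii) is independent of $a$. For part (i), every pair $(x,x')$ inside $X_i$ has $s(x,x')=3$; such a pair contributes $3$ to the cost exactly when it is split, i.e.\ $a(n^4-a)$ pairs are split, giving total internal contribution $3a(n^4-a)$. For part (ii), by the uniformity property, the contribution of pairs between $X_i$ and a fixed $z \notin X_i$ is determined by how many $x\in X_i$ lie on each side; this makes the total cost of part (ii) a linear function of $a$, say $L(a)$.

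Thus the cost as a function of $a$ has the form $C(a) = L(a) + 3a(n^4-a) + (\text{const})$. The term $3a(n^4-a)$ is strictly concave in $a$, and adding a linear function preserves strict concavity, so $C(a)$ is strictly concave on $[0,n^4]$. A strictly concave function on an interval attains its minimum only at the endpoints; hence $C(a) > \min\{C(0),C(n^4)\}$ whenever $0<a<n^4$. This means that moving the split portion of $X_i$ entirely to whichever side minimizes $\{C(0),C(n^4)\}$ strictly decreases the cost of the solution.

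Applying this improvement for each $i$ in turn (the argument for $X_i$ is independent of how other $X_j$'s are placed, since it uses only the uniformity property for pairs involving $X_i$), we conclude that no optimal solution can split any $X_i$, which proves the observation. The argument is essentially routine once the uniformity property is spotted, so there is no substantive obstacle; the only point requiring a little care is verifying that $s(x,z)$ really is the same for every $x\in X_i$ across all three of $\pi_1,\pi_2,\pi_3$ (including the case $z\in Y_i$, where $s(x,z)=2$, and $z\notin X_i\cup Y_i$, where $s(x,z)=0$).
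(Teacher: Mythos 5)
Your proof is correct; the paper states this as an Observation and supplies no argument, treating it as self-evident from the hypothesis that every pair inside $X_i$ is co-clustered in all three input partitions. Your exchange-and-concavity argument is a clean formalization of exactly what must be lurking behind the assertion, and it has the right structure: the interior cost $3a(n^4-a)$ is strictly concave, the boundary cost $L(a)$ is affine by uniformity, the rest is $a$-independent, so the minimum of $C(a)$ on $[0,n^4]$ is attained only at the endpoints, and hence any solution with an interior split of some $X_i$ can be strictly improved by moving the smaller-cost side's elements into the other cluster (which also respects the $k=2$ cluster bound, since no new cluster is created). One small remark: the uniformity property you flag as "the only point requiring a little care" actually requires no inspection of the specific construction of $\pi_1,\pi_2,\pi_3$ at all. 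It is an immediate consequence of the Observation's own premise: if $X_i$ lies inside a single block $B_t$ of each $\pi_t$, then for any fixed $z\notin X_i$ and any $t$, either $z\in B_t$ and $z$ is co-clustered with every $x\in X_i$ in $\pi_t$, or $z\notin B_t$ and $z$ is co-clustered with none of them; thus $s(x,z)$ and $d(x,z)$ are determined by $z$ alone. So your argument in fact proves the slightly more general fact (true for any instance of \textsc{k-Min-CC} with $k\ge 2$) that a group of elements co-clustered in every input partition is kept intact in any optimal solution, which is presumably the folklore fact the authors had in mind.
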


The previous observation allows ourselves to restrict our attention to
solutions where all elements of $X_i$ are co-clustered.
Consider a solution $\pi=(S_1, S_2)$. The cost of
$\pi$ can be expressed as the cost of all pairs of elements in $\pi$. We can
split the cost of $\pi$ into four parts:
\begin{enumerate}
\item the cost of pairs of elements both belonging to $\cup X_i$,
\item the cost of pairs of elements with exactly one element belonging to $\cup X_i$,
\item the cost of pairs of elements in $Y_i\times Y_j$ with $i\neq j$,
\item the cost of pairs of elements both belonging to a set $Y_i$.
\end{enumerate}

We will call
\emph{balanced} a solution $(S_1, S_2)$ where both $S_1$ and $S_2$ contain
exactly $\frac{n}{2}$ sets $X_i$. The following lemma states that optimal
solutions must be balanced.

\begin{Lem}
\label{lem-NP-eq-size}
Let  $\pi=(S_1, S_2)$ be a  solution  of \textsc{2-MIN-CC}($3$), then the cost of
$\pi$ is at most  $\frac{3}{4}n^{10} - \frac{3}{2} n^9 + 3n^7 + \frac{3}{2}
n^4- \frac{3}{2} n^3$ if and only if $\pi$ is a balanced solution.
\end{Lem}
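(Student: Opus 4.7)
The plan is to decompose $cost(\pi)$ into the four contributions (1)--(4) listed just before the lemma and analyze each separately. By Observation~\ref{obs:3-sets} we may restrict to solutions in which each $X_i$ lies entirely in one of $S_1$ or $S_2$; let $a$ denote the number of indices $i$ with $X_i\subseteq S_1$, so that $\pi$ is balanced precisely when $a=n/2$ (we assume $n$ even, as is standard in MIN-BIS).

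Part~(1) is the driver of the dichotomy. Every pair of elements drawn from distinct $X_i,X_j$ has $s=0$ and $d=3$, and there are $n^8$ such element-pairs per unordered index-pair, hence Part~(1) equals
\[
3n^8\left[\binom{a}{2}+\binom{n-a}{2}\right]=\frac{3n^{10}}{4}-\frac{3n^9}{2}+3n^8\left(a-\frac{n}{2}\right)^2.
\]
So Part~(1) equals exactly $\tfrac{3n^{10}}{4}-\tfrac{3n^9}{2}$ when $\pi$ is balanced and otherwise exceeds this value by at least $3n^8$, since $|a-n/2|\ge 1$ for integer $a$ and even $n$.

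Next I would bound Parts~(2)--(4) from above by the trivial ``number of pairs times maximum per-pair cost'' estimate, independently of where the $Y$-elements are placed. The key is to split Part~(2) into the subcase $X_i\times Y_i$ (there are $n^6$ such pairs, each with $s=2$, $d=1$, so per-pair cost at most $2$) and the subcase $X_i\times Y_j$ with $i\ne j$ (there are $n^7-n^6$ such pairs, each with $s=0$, $d=3$, so per-pair cost at most $3$), yielding Part~(2)$\,\le 2n^6+3(n^7-n^6)=3n^7-n^6$. Combining with the crude bounds Part~(3)$\,\le \tfrac{3}{2}(n^4-n^3)$ and Part~(4)$\,\le n^3-n^2$ gives
\[
\text{(2)+(3)+(4)}\le 3n^7-n^6+\frac{3n^4}{2}-\frac{n^3}{2}-n^2\le 3n^7+\frac{3n^4}{2}-\frac{3n^3}{2},
\]
the last inequality amounting to $n^6-n^3+n^2\ge 0$, trivially true for all $n\ge 1$.

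Putting it together, if $\pi$ is balanced then $cost(\pi)\le \tfrac{3n^{10}}{4}-\tfrac{3n^9}{2}+3n^7+\tfrac{3n^4}{2}-\tfrac{3n^3}{2}$, the desired upper bound. If $\pi$ is unbalanced then Part~(1) alone is at least $\tfrac{3n^{10}}{4}-\tfrac{3n^9}{2}+3n^8$ while Parts~(2)--(4) are nonnegative, so $cost(\pi)$ strictly exceeds the threshold because $3n^8-3n^7=3n^7(n-1)\ge 3n^7\ge \tfrac{3n^4}{2}>\tfrac{3n^4}{2}-\tfrac{3n^3}{2}$ for every $n\ge 2$. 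The main obstacle is bookkeeping rather than anything conceptual: one must use the tighter per-pair bound of $2$ (not $3$) for the $X_i\times Y_i$ pairs inside Part~(2), because the crude $3$ would overshoot the target by $n^6$ and break the forward direction.
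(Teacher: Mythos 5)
Your proof is correct and follows the same overall decomposition as the paper: split the cost into the four parts, express Part~(1) as $3n^8\bigl(\binom{a}{2}+\binom{n-a}{2}\bigr)$, observe it is minimized at $a=n/2$ with a gap of $3n^8$ to the next value, and bound Parts~(2)--(4) from above. The forward direction then compares $C(n/2)$ plus the upper bound on (2)--(4) against the threshold, and the reverse direction uses $C(n/2-1)-C(n/2)=3n^8$ to swamp those same upper bounds.

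Where you genuinely differ from the paper is in the bookkeeping for Parts~(2) and (4), and this is not a stylistic choice but a real fix. The paper bounds Part~(2) by $3n^7$ (every pair costing $3$) and Parts~(3)+(4) together by $3\binom{n^2}{2}$, which it writes as $\tfrac{3}{2}n^4-\tfrac{3}{2}n^3$; but $3\binom{n^2}{2}=\tfrac{3}{2}n^4-\tfrac{3}{2}n^2$, which exceeds the $-\tfrac{3}{2}n^3$ term in the lemma's stated threshold by $\tfrac{3}{2}(n^3-n^2)$. As written, the paper's crude bounds therefore do not establish the forward implication. Your tighter accounting --- per-pair cost at most $2$ on the $n^6$ pairs of $X_i\times Y_i$ (since $s=2,d=1$ there), and likewise at most $2$ on the $n\binom{n}{2}$ pairs inside each $Y_i$ --- yields $(2)+(3)+(4)\le 3n^7-n^6+\tfrac{3}{2}n^4-\tfrac{1}{2}n^3-n^2$, and the slack of $n^6$ absorbs the discrepancy, closing the gap cleanly. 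One small terminological nit: you call the Part~(4) bound $n^3-n^2$ ``crude,'' but it already uses the refined per-pair cost of $2$; only the Part~(3) bound $\tfrac{3}{2}(n^4-n^3)$ is genuinely crude. Also, it would be worth a sentence justifying the per-pair cost of $2$ for intra-$Y_i$ pairs explicitly (they are co-clustered in $\pi_1,\pi_2$ but never in $\pi_3$), as you did for the $X_i\times Y_i$ pairs. With those cosmetic touches, your argument is a correct and in fact more careful version of the paper's.
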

\begin{proof}
Notice that the total cost of case 2) is at most $3n^2 \cdot n^5=3n^7$ as
$|\cup Y_i|=n^2$ and  $|\cup X_i|=n^5$, while the sum of total costs of
cases 3) and 4) is at most $3{{n^2} \choose 2} = \frac{3}{2} n^4- \frac{3}{2} n^3$.

Let $z$ be the number of sets $X_i$ included in $S_1$.
The cost of the pairs of elements both belonging to $\cup X_i$ is
$C(z)=3\left ( {z \choose 2} + {{n-z}\choose 2}\right )n^8$.
Indeed,  only
the pairs of elements in  distinct sets $X_i$ that are
co-clustered in $S_1$ and $S_2$ contribute to the cost, as no pair
of elements belonging to two distinct sets $X_i$ is co-clustered
in an input partition.
The minimum of $C(z)$ is attained for $z=\frac{n}{2}$. For any other $z$, the value of $C(z)$
is at least
equal $C(\frac{n}{2}-1)$.

Since $C(\frac{n}{2})=\frac{3}{4}n^{10} - \frac{3}{2} n^9$, the maximum total cost for a
balanced solution is $\frac{3}{4} n^{10} - \frac{3}{2} n^9 + 3n^7 +
\frac{3}{2} n^4- \frac{3}{2} n^3$, while the maximum total
cost for an unbalanced solution is  at least
$C(\frac{n}{2}-1)=\frac{3}{4} n^{10} -\frac{3}{2} n^9 +3n^86>
\frac{3}{4}n^{10} - \frac{3}{2} n^9 + 3n^7 + \frac{3}{2} n^4- \frac{3}{2} n^3$.
\end{proof}

From Lemma~\ref{lem-NP-eq-size} we can consider only balanced solutions. A balanced
solution $\pi$ is called \emph{standard} if, for each $i$, $X_i$ and $Y_i$ are
contained in the same set of $\pi$. The following lemma shows that we can
consider only standard solutions

\begin{Lem}
\label{lem-NP-el_co_cluster}
Let  $\pi=(S_1, S_2)$ be a balanced solution  of \textsc{2-MIN-CC}($3$), then the cost of
$\pi$ is at most  $\frac{3}{4}n^{10} -\frac{3}{2} n^9+ \frac{3}{4}n^7 - \frac{1}{2}n^6 +\frac{1}{4} n^4 + \frac{1}{2} n^3 - \frac{1}{2}n^2$ iff  $\pi$ is a
standard solution.
\end{Lem}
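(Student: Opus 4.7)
The plan is to prove both directions of the iff: (i) every standard balanced solution has cost at most the stated bound, and (ii) every non-standard balanced solution has cost strictly above it. Throughout, I use the four-case decomposition of the cost listed just before the lemma.

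For direction (i), fix a standard balanced $\pi = (S_1, S_2)$ determined by a bisection $(P, P^c)$ of $\{1, \ldots, n\}$ with $|P| = n/2$ and $X_i \cup Y_i \subseteq S_1$ iff $i \in P$. Case (1) contributes exactly $C(n/2) = \frac{3}{4} n^{10} - \frac{3}{2} n^9$ by the argument already used for Lemma~\ref{lem-NP-eq-size}. Case (2) admits exact evaluation: same-index pairs $(x_{i,a}, y_{i,b})$ have $s = 2, d = 1$ and are co-clustered in $\pi$, contributing $1$ each for a total of $n^6$; different-index pairs $(x_{i,a}, y_{j,b})$ have $s = 0, d = 3$ and contribute $3$ iff $i, j$ lie on the same side. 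Case (4) contributes $\frac{n^3 - n^2}{2}$ since every within-$Y_i$ pair has $d = 1$ and is co-clustered in $\pi$. Case (3) is the only part depending on the bisection: the $m$ edge-pairs (with $s = 1$) contribute at most $2$ each, and the remaining $\frac{n^4 - n^3}{2} - m$ non-edge different-index $y$-pairs contribute at most $3$ each. Summing the four totals and bounding $m \le \binom{n}{2}$ yields the stated polynomial as an upper bound.

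For direction (ii), suppose $\pi = (S_1, S_2)$ is balanced but not standard, so some element $y_{i,k}$ lies in the cluster opposite to $X_i$. Let $\pi'$ be obtained from $\pi$ by moving $y_{i,k}$ into the cluster containing $X_i$. Only the pairs containing $y_{i,k}$ change in cost, and the dominant contribution comes from pairs involving an element of $\bigcup_j X_j$: the $n^4$ pairs $(y_{i,k}, x_{i,a})$ change from cost $s = 2$ to $d = 1$ (saving $n^4$); the $(n/2 - 1) n^4$ pairs with $X_j$ on the same side as $X_i$ in $\pi'$ (with $j \ne i$) change from $0$ to $d = 3$ (adding $3(n/2 - 1) n^4$); and the $(n/2) n^4$ pairs with $X_j$ on the opposite side change from $d = 3$ to $0$ (saving $3 (n/2) n^4$). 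The net $X$-contribution is $-4 n^4$, while the pairs $(y_{i,k}, y)$ change by at most $O(n^2)$ in magnitude. Iterating the move over all misplaced $y$'s produces a standard $\pi^{*}$ with $cost(\pi) \ge cost(\pi^{*}) + 4 n^4 - O(n^2)$. Since the costs of standard balanced solutions differ by at most $O(n^2)$ (only case (3) varies with the bisection, linearly in the number of cut edges), this strictly exceeds the stated bound for $n$ large enough, establishing (ii).

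The main obstacle will be matching the low-order polynomial constants exactly: the leading $\frac{3}{4} n^{10} - \frac{3}{2} n^9$ term is immediate from case (1), but the polynomial tail requires careful bookkeeping across cases (2), (3), and (4), together with an iteration argument that ensures the $4 n^4$-per-move savings aggregate faithfully without the $O(n^2)$ error from within-$Y$ changes accumulating too fast over the (at most $n^2$) moves.
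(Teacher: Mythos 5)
Your high-level plan follows the paper's proof: the same four-part cost decomposition, exact/upper bound for standard balanced solutions, and an argument that misplacing some $y_{i,k}$ is strictly worse. But there are two substantive things worth flagging.

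First, your direction~(ii) is actually \emph{more} careful than the paper's. The paper only asserts that the $n^4$ pairs in $\{y\}\times X_i$ go from cost $1$ to cost $2$, adding $n^4$, and treats the remaining cost as if it were unchanged; it never rules out that placing $y$ on the wrong side might \emph{save} cost on the cross-index pairs $\{y\}\times X_j$, $j\ne i$. You close exactly that gap by computing the net change when $y_{i,k}$ is moved to the side of $X_i$: $-n^4$ on same-index $X$-pairs, $+3(n/2-1)n^4$ on newly co-clustered cross-index $X$-pairs, $-3(n/2)n^4$ on formerly co-clustered ones, for a net $-4n^4$, dominating the $O(n^2)$ change on $Y$-pairs. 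That is the right argument and it repairs a real weakness in the paper's version.

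Second, there is a genuine problem in direction~(i), and you half-see it when you call the constant bookkeeping ``the main obstacle.'' Your own case~(2) accounting is correct: $n^6$ from same-index pairs and, for a balanced standard $\pi$, $3\,n^5$ for each of the $2\cdot(n/2)(n/2-1)=n^2/2-n$ ordered same-side index pairs $(i,j)$ with $i\ne j$, for a total of
\[
n^6 + 3\left(\tfrac{n^2}{2}-n\right)n^5 \;=\; \tfrac{3}{2}n^7 - 2n^6.
\]
This does not match the $\tfrac{3}{4}n^7-\tfrac{1}{2}n^6$ term in the lemma's bound, which corresponds to counting only $(n/2)(n/2-1)$ (unordered) same-side index pairs and assigning each $n^5$ element pairs, i.e.\ dropping a factor of two. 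A similar discrepancy arises in case~(3): the co-clustered non-edge $Y$-pairs number about $\tfrac{n^4}{4}$ and cost $3$ each, so the leading term is $\tfrac{3}{4}n^4$, not $\tfrac{1}{4}n^4$. So if you actually carried out the summation you sketch, you would \emph{not} arrive at the stated polynomial; your per-case work is correct, and it's the lemma's displayed constants that don't reconcile with the construction. Your argument proves the qualitative statement (standard $\Leftrightarrow$ cost below some explicit $\tfrac{3}{4}n^{10}-\tfrac{3}{2}n^9+\Theta(n^7)$ threshold), which is what the reduction ultimately needs, but it cannot produce the exact coefficients in the lemma as written. This needs to be surfaced rather than deferred.
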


\begin{proof}
Let $\pi=(S_1, S_2)$ be a balanced solution, then the total cost  of pairs of
elements with exactly one element belonging to $\cup X_i$ is at most
$\frac{3}{4}n^7 - \frac{1}{2}n^6$ as all pairs in $X_i\times Y_j$, with $i\neq j$, contribute with a
cost $3$ if and only if $X_i\cup Y_j$ is contained in a set of $\pi$, and have no
cost otherwise. At the same time all pairs in $X_i\times Y_i$ have cost $1$ in
any standard solution, as $X_i\cup Y_i$
are a set of two input partitions, while in the third input partition,
$\pi_3$, no pairs in  $X_i\times Y_i$ are co-clustered.
If $\pi$ is a standard solution, then the total cost of pairs of elements
in $Y_i\times Y_j$ with $i\neq j$ is $\frac{1}{4} n^4$ as only half of such pairs are
co-clustered in a standard solution.
Following the reasoning of the  proof of Lemma~\ref{lem-NP-eq-size}, with our
new estimates of cases 2) and 3), it is immediate to notice
that, if $\pi$ is a standard solution, then its cost is at most
$\frac{3}{4} n^{10} - \frac{3}{2} n^9+ \frac{3}{4}n^7 - \frac{1}{2}n^6 +\frac{1}{4} n^4 + n {n\choose 2}=
\frac{3}{4} n^{10} - \frac{3}{2} n^9+ \frac{3}{4}n^7 - \frac{1}{2}n^6 +\frac{1}{4} n^4 + \frac{1}{2} n^3 - \frac{1}{2}n^2$.

Now assume that $\pi$ is not a standard solution, that is there exists an
element $y\in Y_i$ that is not clustered together with all elements of $X_i$.
Again, following the same lines of the proof of Lemma~\ref{lem-NP-eq-size},
the cost of $\pi$ is at least $\frac{3}{4}n^{10} - \frac{3}{2} n^9+
\frac{3}{4}n^7 - \frac{1}{2}n^6 + \frac{1}{4} n^4 + n^4$, as all pairs in
$\{y\}\times X_i$ have a cost $2$, instead of $1$ as in a standard partition.
Since $\frac{3}{4}n^{10} - \frac{3}{2} n^9+ \frac{3}{4}n^7 - \frac{1}{2}n^6 +
\frac{1}{4} n^4 + n^4 > \frac{3}{4}n^{10} - \frac{3}{2} n^9+ \frac{3}{4}n^7 -
\frac{1}{2}n^6 +\frac{1}{4} n^4 + \frac{1}{2} n^3 - \frac{1}{2}n^2$,  the
lemma follows.
\end{proof}

Given a standard solution $\pi$, by construction of the reduction, with each edge $(v_i,v_j)\in E$, we associate
a pair $\{y_{i,h}, y_{j,l}\}$. Let us denote by $F$ the set of such pairs, and
by $F_c$ the subset of all pairs in $F$ that are co-clustered in $\pi$.
We conclude the proof with the following theorem.

\begin{Thm}
\label{Thm-NP}
Let $G=(V,E)$ be an instance of MIN-BIS, and let $(\pi_1, \pi_2, \pi_3)$ be its
associated instance of \textsc{2-MIN-CC}($3$). Then  $(\pi_1, \pi_2, \pi_3)$
has a solution of cost
$\frac{3}{4}n^{10} - \frac{3}{2} n^9+ 3/4n^7 - \frac{1}{2}n^6 +\frac{1}{4} n^4
+ \frac{3}{2} n^4 - \frac{1}{2} n^3 - \frac{1}{2}n^2 + (|F|-k) -k$ if and only
if $G$ has a bisection of cost $k$.
\end{Thm}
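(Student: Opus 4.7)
The plan is to combine the three preceding structural lemmas with a direct cost accounting to establish, via a cost-preserving bijection between bisections of $G$ and ``standard balanced'' clusterings, that minimizing the consensus cost is equivalent to minimizing the bisection value.

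First, I would use Observation~\ref{obs:3-sets}, Lemma~\ref{lem-NP-eq-size}, and Lemma~\ref{lem-NP-el_co_cluster} to restrict attention to standard balanced solutions. Together these lemmas show that any solution achieving the target cost must keep each $X_i$ together, split the indices $\{1,\ldots,n\}$ evenly between $S_1$ and $S_2$, and place each $Y_i$ with the corresponding $X_i$. Such a solution is therefore determined by the balanced bipartition $(A,B)$ of $\{1,\ldots,n\}$ defined by $i \in A \Leftrightarrow X_i \subseteq S_1$, i.e., a bisection of $G$; conversely, every bisection yields a unique standard balanced solution. It thus suffices to compute the cost of the standard balanced solution induced by a bisection of cut value $k$ and match it with the formula in the theorem.

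Next, I would decompose the cost using the four categories listed before Lemma~\ref{lem-NP-eq-size}. Categories (1), (2) and (4) depend only on the standard balanced structure and not on $k$, so their contributions coincide with the constants already computed in the proofs of Lemmas~\ref{lem-NP-eq-size} and \ref{lem-NP-el_co_cluster}. The dependence on $k$ is confined to Category (3), the pairs in $Y_i \times Y_j$ with $i \neq j$. Split each index pair $\{i,j\}$ into the four subcases same-side vs.\ different-side and edge vs.\ non-edge, and recall that the unique $F$-pair arising from an edge $(v_i,v_j)$ has $(s,d)=(1,2)$ while the remaining pairs in $Y_i\times Y_j$ have $(s,d)=(0,3)$. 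Summing over $|F|-k$ same-side edges, $k$ different-side edges, $2\binom{n/2}{2}-(|F|-k)$ same-side non-edges, and $(n/2)^2-k$ different-side non-edges collapses, after mechanical simplification, to an expression of the form $C_3 + (|F|-k) - k$ for an explicit $k$-independent constant $C_3$, thereby contributing the final two summands in the theorem's formula.

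The biconditional then follows in both directions: a bisection of cost $k$ induces a standard balanced solution whose total cost matches the target value exactly; conversely, any solution of that cost is standard balanced by the structural lemmas, hence corresponds to a balanced bipartition, and the linear $k$-dependent term identifies the cut value. Because the target cost is strictly monotone in $k$, minimizing the consensus cost is equivalent to minimizing the bisection, which transfers NP-hardness of \textsc{MIN-BIS} to \textsc{2-MIN-CC}$(3)$. The principal obstacle is the Category (3) bookkeeping: one has to track how the presence of the single $F$-pair shifts the ``bulk'' contribution $3n^2$ (same-side) or $0$ (different-side) of a $Y_i\times Y_j$ block by $-1$ or $+1$ respectively, and then combine these corrections with the exact counts of same/different-side edges induced by a balanced bipartition. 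All remaining manipulations are routine polynomial arithmetic.
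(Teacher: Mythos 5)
Your proposal follows essentially the same route as the paper's own proof: restrict to standard balanced solutions via Observation~\ref{obs:3-sets} and Lemmas~\ref{lem-NP-eq-size}--\ref{lem-NP-el_co_cluster}, decompose the cost into the four categories, observe that only the $Y_i\times Y_j$ pairs (case 3) carry the $k$-dependence, and account for the per-block $\pm 1$ corrections introduced by the single $F$-pair in each edge block to recover $|F|-2k$ as the variable term. The paper phrases the case-3 accounting through the set $F_c$ of co-clustered $F$-pairs rather than your same-side/different-side edge counts, but these are the same computation, so the approaches coincide.
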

\begin{proof}
Let $(V_1, V_2)$ be a bisection with cost $k$. Then let $S_1$ be the set
$\cup_{i\in V_1} (X_i\cup Y_i)$, and let $S_2=\cup_{i\in V_2} (X_i\cup Y_i)$.
By construction $(S_1, S_2)$ has cost $\frac{3}{4}n^{10} - \frac{3}{2} n^9+ \frac{3}{4}n^7 - \frac{1}{2}n^6
+\frac{1}{4} n^4 + \frac{3}{2} n^4 - \frac{1}{2} n^3 - \frac{1}{2}n^2 + (|F|-k) -k$.

Now let $(S_1, S_2)$ be a solution of  \textsc{2-MIN-CC}($3$) with cost
$\frac{3}{4} n^{10}/4 - \frac{3}{2} n^9+ \frac{3}{4}n^7 - \frac{1}{2}n^6
+\frac{1}{4} n^4 + \frac{3}{2} n^4 - \frac{1}{2} n^3 - \frac{1}{2}n^2 + (|F|-k) -k$. By
Lemmas~\ref{lem-NP-eq-size},~\ref{lem-NP-el_co_cluster} $(S_1, S_2)$ must be
a standard solution.

Recall that the cost of a solution $\pi=(S_1, S_2)$ can be
expressed as the cost of all pairs of elements in $\pi$, such a
cost can be split into parts 1), 2), 3) and 4).
Moreover, following the proof of
  Lemmas~~\ref{lem-NP-eq-size},~\ref{lem-NP-el_co_cluster}, we know that the
total cost of case 1) is $\frac{3}{4}n^{10} - \frac{3}{2} n^9$, the total cost of case 2)
is $\frac{3}{4}n^7 - \frac{1}{2}n^6 $. By direct inspection the total cost of case 4) is
$\frac{1}{2}n^3 + \frac{1}{2}n^2$.

We still have to consider case 3), that is the cost of  pairs $(y_{i,q},
y_{j,t})$, with $j \neq i$.
We have to distinguish three cases, according to the fact that
$(y_{i,q},y_{j,t})\in F-F_c$ (in this case the cost is $1$),
$(y_{i,q},y_{j,t})\in F_c$  (in this case the cost is $2$),
$(y_{i,q},y_{j,t})\notin F$ (in this case the cost is $3$ if $y_{i,q}$ and
$y_{j,t}$ are co-clustered, and $0$ otherwise.
Therefore the total cost of case 3) can be written as $n^2 {n \choose
  2}+|F-F_c|-|F_c|$.

Summing up the costs of the four cases we obtain a total cost $\frac{3}{4}n^{10} - \frac{3}{2}
n^9+ \frac{3}{4}n^7 - \frac{1}{2}n^6 +\frac{1}{4}n^4 + \frac{3}{2} n^4 -
\frac{1}{2} n^3 - \frac{1}{2}n^2 + |F| -2|F_c|$.
Consequently, taking into account the initial hypothesis,  $|F_c|=k$.
Let $(V_1, V_2)$ be the solution of $G$ where $V_1=\{ v_i | X_i\subseteq
S_1\}$ and $V_2=V-V_1$. By construction the number of edges of $E$ crossing
the bipartition  $(V_1, V_2)$ is equal to $|F_c|$ which, in turn, is equal to
$k$ completing the proof.
\end{proof}

\section{Conclusions}

In this paper we have studied the \textsc{Minimum Consensus Clustering}
problem when
the output partition contains at most a constant number of sets. We have shown
that the MinDisAg algorithm~\cite{toc:v002/a013} can be applied also for our
problem, hence showing that its applicability is not restricted to unweighted
problems. Moreover we have proved that the same problem is NP-hard even on instances of three
input partitions, thereby justifying our reliance on polynomial time
approximation algorithms.

In our opinion the main idea behind MinDisAg algorithm could be applied to
some more general versions of both \textsc{Minimum Consensus Clustering} and
\textsc{Minimum Correlation Clustering}  than the ones studied here and
in~\cite{toc:v002/a013}.

\section*{Acknowledgments}

PB, and GDV  have been partially supported by FAR 2008 grant
  ``Computational models for phylogenetic analysis of gene variations''.
PB has been partially supported by
the MIUR PRIN 2007 Project  ``Mathematical aspects and emerging
applications of automata and formal languages''.


\end{document}